\numberwithin{equation}{section}
\newtheorem{thm}{Theorem}[section]
\newtheorem{lemma}[thm]{Lemma}
\newtheorem{prop}[thm]{Proposition}
\newtheorem{cor}[thm]{Corollary}
{\theorembodyfont{\rmfamily}
\newtheorem{defn}[thm]{Definition}
\newtheorem{example}[thm]{Example}

\newtheorem{rmk}[thm]{Remark}
}
\newcommand{\qed}{\hfill \mbox{\raggedright \rule{.07in}{.1in}}}
\newenvironment{proof}{\vspace{1ex}\noindent{\bf
Proof}\hspace{0.5em}}{\hfill\qed\vspace{1ex}}
\newcommand{\R}{{\mathbb R}}
\newcommand{\T}{{\mathbb T}}
\newcommand{\C}{{\mathbb C}}
\newcommand{\Z}{{\mathbb Z}}
\newcommand{\N}{{\mathbb N}}
\newcommand{\var}{\operatorname{var}}
\newcommand{\dist}{\operatorname{dist}}
\newcommand{\cov}{\operatorname{cov}}
\title{On the Validity of the $0$-$1$ Test for Chaos}
\author{
Georg A. Gottwald 
\\ School of Maths and Stats
\\ University of Sydney
\\ NSW 2006, Australia
\and
Ian Melbourne 
\\ Department of Maths 
\\ University of Surrey
\\ Guildford GU2 7XH, UK
}
\date{}
\begin{document}

\maketitle

 \begin{abstract}
 In this paper, we present a theoretical justification of the $0$--$1$
 test for chaos.    In particular, we show that with probability one, the
 test yields $0$ for periodic and quasiperiodic dynamics, and $1$ for
 sufficiently chaotic dynamics.
 \end{abstract}

\section{Introduction} 
\label{sec-intro}

In~\cite{GM04}, we introduced a new method of detecting chaos in
deterministic dynamical system in the form of a binary test.  The
method applies directly to the time series data and does not require
phase space reconstruction.  As explained in~\cite{GM04}, with
probability one the test gives the output $K=0$ for quasiperiodic
dynamics and $K=1$ for sufficiently chaotic dynamics.

In~\cite{GM05}, we proposed a simplified version of the test that is
more effective for systems with a moderate amount of noise.  The
effectiveness of the new method was demonstrated for
higher-dimensional systems in~\cite{GM05} and for experimental
data~\cite{FGMW07}.

The main aim of this paper is to put the simplified version of the
test on a rigorous footing, going far beyond the results indicated
in~\cite{GM04} for the original test.  In addition, our analysis of
the test leads to a significant improvement which was used in our
paper~\cite{GMapp} detailing the implementation of the test.

We first recall the simplified form of the test proposed
in~\cite{GM05}.  Let $f:X\to X$ be a map with invariant ergodic
probability measure $\mu$.  Let $v:X\to\R$ be a scalar
square-integrable observable.  Choose $c\in(0,2\pi)$, $x\in X$, and
define
\begin{align} \label{eq-p}
p_c(n)=\sum_{j=0}^{n-1}e^{ijc}v(f^jx).
\end{align}
Next, define the mean-square displacement
\begin{align} \label{eq-M}
M_c(n)=\lim_{N\to\infty}\frac1N\sum_{j=1}^N|p_c(j+n)-p_c(j)|^2.
\end{align}
Finally, let
\begin{align} \label{eq-K}
K_c=\lim_{N\to\infty}\frac{\log M_c(n)}{\log n}.
\end{align}
The claim in~\cite{GM05}, substantiated in this paper, is that
typically (i) the limit $K_c$ exists, (ii) $K_c\in\{0,1\}$, and (iii)
$K_c=0$ signifies regular dynamics while $K_c=1$ signifies chaotic
dynamics.

\begin{rmk}   \label{rmk-K}   (a)
The definition of $p_c(n)$ in~\eqref{eq-p} is slightly different from in~\cite{FGMW07, GM05,GMapp} where
$p_c(n)=\sum_{j=0}^{n-1}\cos(jc)v(f^jx)$.  In the current paper it is
natural to simplify analytic calculations rather than numerical
computations, but apart from that the methods are equivalent.

(b) For fixed $c$, it follows from the ergodic theorem that the limit
$M_c(n)$ in~\eqref{eq-M} exists for almost every initial condition $x$
and the limit is independent of $x$.  The common limit is
\[
M_c(n)=\int_X |p_c(n)|^2d\mu=\|p_c(n)\|_2^2.
\]
To see this, compute that $p_c(j+n)-p_c(j)=e^{ijc}p_c(n)\circ f^j$,
and so $M_c(n)=\lim_{N\to\infty}\frac1N\sum_{j=0}^{N-1}|p_c(n)|^2\circ
f^j$ which converges to the space average $\int_X |p_c(n)|^2 d\mu$
almost everywhere.

(c) Strictly speaking, the limit $K_c$ in~\eqref{eq-K} need not be
well-defined.  Of course, $K_c^+=\limsup_{N\to\infty}\log M_c(n)/\log
n$ is well-defined, and it follows from Proposition~\ref{prop-K+} that
$K_c^+\in[0,2]$ for all $c$.  (In the case of periodic dynamics,
$K_c=2$ for isolated values of $c$.)
\end{rmk}

\begin{example}   \label{ex-logistic}
Consider the logistic map $f:[0,1]\to[0,1]$ given by $f(x)=ax(1-x)$
for $0\le a\le 4$.  This family of maps is particularly
well-understood~\cite{Lyubich02,AvilaMoreira05}: we can
decompose the parameter interval according to
$[0,4]=\mathcal{P}\cup\mathcal{C}\cup\mathcal{N}$ where $\mathcal{N}$
has Lebesgue measure zero and the asymptotic dynamics consists of a
periodic attractor (of period $q\ge1$) for $a\in\mathcal{P}$ and a
strongly chaotic attractor consisting of $q\ge1$ disjoint intervals
for $a\in\mathcal{C}$ (satisfying the Collet-Eckmann condition).

We obtain the following result:
\begin{prop}  \label{prop-logistic}
Let $v:[0,1]\to\R$ be H\"older.
Let $a\in\mathcal{P}\cup\mathcal{C}$ and define $q$ as above.
\begin{itemize}
\item[(a)]  If $a\in\mathcal{P}$, then $K_c=0$ for all $c\neq 2\pi j/q$.
\item[(b)]  If $a\in\mathcal{C}$, then $K_c=1$ for all $c\neq 2\pi j/q$
unless $v$ is infinitely degenerate\footnote{Lying in a closed
subspace of infinite codimension in the space of H\"older functions}.
\end{itemize}
\end{prop}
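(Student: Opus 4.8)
The plan is to reduce everything to the growth rate of $M_c(n)$, which by Remark~\ref{rmk-K}(b) equals $\|p_c(n)\|_2^2$, and to control this through the decay of correlations of $v$. Writing $R(m)=\int_X v\,(v\circ f^{|m|})\,d\mu$ and expanding the square, I would first record the exact identity
\[
M_c(n)=\sum_{|m|<n}(n-|m|)\,e^{imc}R(m),
\]
so that $M_c(n)$ is the spectral measure $\sigma_v$ of $v$ (with respect to the Koopman operator $U\phi=\phi\circ f$) integrated against the Fej\'er kernel $F_n(\psi)=\sin^2(n\psi/2)/\sin^2(\psi/2)$ centred at $\psi=-c$. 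The growth of $M_c(n)$ is then governed entirely by the local behaviour of $\sigma_v$ near $-c$.

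Second, I would import the ergodic theory of Collet--Eckmann maps. Since the attractor consists of $q$ cyclically permuted intervals on which $f^q$ is mixing, the transfer operator has peripheral spectrum exactly the $q$-th roots of unity (each simple) with a spectral gap below; the known exponential decay of correlations for H\"older observables (Keller--Nowicki, Young) then splits $R(m)=P(m)+\tilde R(m)$, where $P(m)=\sum_{l=0}^{q-1}\alpha_l e^{2\pi ilm/q}$ is $q$-periodic and $|\tilde R(m)|\le C\rho^{|m|}$ with $\rho<1$. For $c\neq 2\pi j/q$ the periodic part contributes only $\sum_l\alpha_l F_n(c+2\pi l/q)=O(1)$ (the Fej\'er kernel is evaluated away from its peak), while the exponentially small part gives
\[
M_c(n)=\sigma_c^2\,n+O(1),\qquad \sigma_c^2:=\sum_{m\in\Z}e^{imc}\tilde R(m)\ge 0 .
\]
This is the clean dichotomy behind (ii)--(iii): if $\sigma_c^2>0$ then $M_c(n)\sim\sigma_c^2 n$ and $K_c=1$, while if $\sigma_c^2=0$ then $M_c(n)$ stays bounded and $K_c=0$. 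Part~(a) is the degenerate instance of this computation: for a periodic attractor the Koopman operator acts as a cyclic permutation on $q$ points, all correlations are exactly $q$-periodic, $\tilde R\equiv 0$, so $\sigma_c^2=0$ and $K_c=0$ for every $c\neq 2\pi j/q$.

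Third, and this is where the substance of (b) lies, I would study the nonnegative Hermitian form $v\mapsto\sigma_c^2(v)=2\pi h_v(-c)$, where $h_v$ is the density of the absolutely continuous part of $\sigma_v$; exponential decay makes $h_v$ real-analytic off the resonances $2\pi j/q$. Because a nonnegative analytic density with a zero at $-c$ vanishes there to order at least two, one checks that $\sigma_c^2(v)=0$ holds precisely when the frequency-$(-c)$ component $\hat v(-c)$ of $v$ vanishes in the spectral multiplicity space (equivalently, $v=e^{ic}\,w\circ f-w$ is an $L^2$ twisted coboundary). Defining the infinitely degenerate observables to be those in this kernel $N_c=\{v:\sigma_c^2(v)=0\}$—a closed subspace of the H\"older space, since the form is bounded there—the proposition reduces to showing that $N_c$ has infinite codimension, i.e. that $\sigma_c^2$ has infinite rank.

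The main obstacle will be exactly this last claim. The mechanism I would use is that the mixing map $f^q$ is exact and Bernoulli, so its Koopman operator has Lebesgue spectrum of countably infinite multiplicity; hence the frequency-evaluation $v\mapsto\hat v(-c)$ takes values in an infinite-dimensional fibre. The remaining work is to verify that H\"older observables already exhaust infinitely many dimensions of this fibre—concretely, to construct an infinite family $v_1,v_2,\dots\in C^\alpha$ (e.g. smooth bumps supported on intervals refined along the inverse branches of $f$) whose frequency-$(-c)$ components are linearly independent, so that no finite-codimension subspace can lie in $N_c$. Granting this, every $v\notin N_c$ has $\sigma_c^2(v)>0$ and therefore $K_c=1$, which is assertion~(b). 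The delicate points are the simultaneous control of the small denominators $e^{ipc}-1$ along periodic orbits (which is what obstructs a single $L^2$ solution $w$, keeping $\sigma_c^2$ positive despite orbit-by-orbit solvability of the twisted cohomological equation) and the quantitative independence of the constructed frequency components.
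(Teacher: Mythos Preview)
Your setup for part~(a) and the identity $M_c(n)=\sum_{|m|<n}(n-|m|)e^{imc}R(m)$ match the paper's approach (Proposition~\ref{prop-calc} and the opening of Section~\ref{sec-regular}), and the decomposition $R(m)=P(m)+\tilde R(m)$ with exponentially decaying $\tilde R$ is precisely what underlies Theorem~\ref{thm-expcycle} and Corollary~\ref{cor-expcycle}. So far so good.

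The genuine gap is in part~(b). You define the infinitely degenerate observables to be $N_c=\{v:\sigma_c^2(v)=0\}$ for a \emph{fixed} $c$, and propose to show that this $N_c$ has infinite codimension. But the proposition asserts a single condition on $v$: if $v$ is not infinitely degenerate, then $K_c=1$ for \emph{every} non-resonant $c$ simultaneously. Your argument yields, at best, that for each $c$ separately the bad set $N_c$ is thin; it says nothing about $\bigcup_{c\neq 2\pi j/q}N_c$, which is what must be controlled, and which is neither a linear subspace nor evidently contained in any closed infinite-codimension subspace. Indeed, by your own analyticity observation, any $v$ for which $\min_c S_v(c)=0$ lands in this union, and ``minimum equals zero'' is morally a codimension-one constraint, not an infinite-codimension one. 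Your Lebesgue-spectrum argument, even if the construction of independent H\"older bump functions were carried out, is intrinsically a statement about a single spectral fibre at $-c$ and cannot close this uniform-in-$c$ gap.

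The paper handles this differently: it invokes the hyperbolic structure directly, citing~\cite{MG08}, where the Young-tower model for Collet--Eckmann maps is used to show that $S(c)$ is bounded away from zero for \emph{all} $c\neq 2\pi j/q$, with the exceptional $v$ lying in a single closed infinite-codimension subspace independent of $c$. This uniform positivity is exactly what your purely spectral route does not reach, and it is why the paper isolates hyperbolicity as a separate hypothesis in Section~\ref{sec-chaotic}: exponential decay of correlations alone gives only $K_c=1$ for all but finitely many $c$ (Corollary~\ref{cor-expcycle}), not for all of them.
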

Hence, the test succeeds with probability one for logistic map
dynamics.
\end{example}

Part (a) holds for general periodic dynamics (and all continuous
observables).  In Section~\ref{sec-regular}, we prove that the test
yields $K_c=0$, for almost all $c$, for quasiperiodic dynamics,
provided we make smoothness assumptions on $v$.  This justifies our
claim that $K_c=0$ for regular dynamics.

The chaotic case is discussed extensively in
Section~\ref{sec-chaotic}.  In particular we obtain \mbox{$K_c=1$}
under various assumptions:
\begin{description}
\item[(i) Positivity of power spectra]
\item[(ii) Exponential decay of autocorrelations] 
\item[(iii) Summable decay of autocorrelations plus hyperbolicity]
\end{description}
(In fact, (ii) and (iii) are sufficient conditions for (i).)

In many situations, including the logistic map with
$a\in\mathcal{C}$, it is necessary to consider $f^q$ instead of $f$,
and autocorrelations decay only up to a finite cycle (of length $q$).
As shown in Section~\ref{sec-chaotic}, criteria (ii) and (iii)
generalise to this situation.

\paragraph{Summable decay without hyperbolicity assumptions}
Without making hyperbolicity assumptions, we have no definitive
results when autocorrelations decay subexponentially.  However, there
is some partial information discussed in Section~\ref{sec-summable}.
If the autocorrelation function 
is summable, then the power spectrum $S(c)$ exists for all
$c\in(0,2\pi)$ by the Wiener-Khintchine Theorem~\cite{VanKampen},
implying that
\begin{align*}
M_c(n)=S(c) n +o(n).
\end{align*}
Under slightly stronger assumptions on the decay rate
\begin{align*}
M_c(n)=S(c) n + O(1).
\end{align*}
In the former case, $K^+_c=\limsup_{n\to\infty}\log M_c(n)/\log
n\in[0,1]$.  In the latter case, $K_c$ exists and takes the value $0$
or $1$ depending on where $S(c)=0$ or $S(c)>0$ (but see
Remark~\ref{rmk-exp}).

Again, we obtain similar results if autocorrelations are summable up to 
a finite cycle.

\paragraph{Improved diagnostic in the test for chaos}
The $o(n)$ and $O(1)$ terms above are nonuniform in $c$ but
in Section~\ref{sec-summable} we show
that the source of nonuniformity is easily dealt with.
Define 
\[
D_c(n)=M_c(n)- (Ev)^2\frac{1-\cos nc}{1-\cos c}.   
\]
Here $Ev=\int_X v\, d\mu$ denotes expectation with respect to $\mu$.
Under the above conditions we obtain $D_c(n)=S(c) n +o(n)$ 
(hence $K^+_c=\limsup_{n\to\infty}\log M_c(n)/\log n\in[0,1]$)
for summable autocorrelation functions
and $D_c(n)=S(c) n + O(1)$ (hence 
$K_c$ takes the values either $0$ or $1$) under slightly
stronger conditions on the decay of the autocorrelation function
as before, but the $o(n)$ and $O(1)$ terms are now uniform in $c$ (see
Section~\ref{sec-summable}).  In~\cite{GMapp}, we proposed using
$D_c(n)$ instead of $M_c(n)$ in the numerical implementation of the
$0$--$1$ test, and demonstrated the improved performance of the test.

\paragraph{Nonsummable decay}
The summability condition in the Wiener-Khintchine Theorem can be
weakened considerably.  For example, if autocorrelations decay at a
square summable rate 
(including $k^{-d}$ for any $d>\frac12$), then the
power spectrum exists almost everywhere and so $M_c(n)=S(c)n+o(n)$ for
almost every $c$.  (In this generality there is no uniformity in the
error term for $D_c(n$.)  This and related results is discussed in
Section~\ref{sec-nonsum}.

\paragraph{Correlation method}
Our emphasis in this paper is on understanding the properties
of the limit $K_c$ as defined in~\eqref{eq-K}.  However,
in~\cite{GMapp}, we proposed
computing $K_c$ as the correlation of the mean-square
displacement $M_c(n)$ (or $D_c(n)$) with $n$.
The advantages of this approach were demonstrated in~\cite{GMapp}.
In Section~\ref{sec-corr}, we verify that the theoretical
value of $K_c$ remains $0$ for regular dynamics and $1$ for chaotic dynamics.

\paragraph{}
The paper concludes with a discussion section (Section~\ref{sec-potato}).
We end the introduction by proving that $K_c^+\in[0,2]$ as claimed in
Remark~\ref{rmk-K}(c).

\begin{prop} \label{prop-K+}
Let $K^+_c=\limsup_{n\to\infty}\log M_c(n)/\log n$.
If $v$ is not identically zero, then $K^+(c)\in[0,2]$ for all $c$.
\end{prop}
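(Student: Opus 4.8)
The plan is to start from the identity in Remark~\ref{rmk-K}(b), namely $M_c(n)=\|p_c(n)\|_2^2$, and to sandwich $M_c(n)$ between two bounds: an upper bound of order $n^2$ (forcing $K_c^+\le 2$) and a lower bound that stays away from zero along a subsequence (forcing $K_c^+\ge 0$). The two structural facts I would rely on throughout are that the composition operator $g\mapsto g\circ f$ is an $L^2$-isometry, because $\mu$ is $f$-invariant, so $\|v\circ f^j\|_2=\|v\|_2$ for every $j$; and that $v\not\equiv 0$ gives $b:=\|v\|_2>0$, which is the only place the hypothesis is used.

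For the upper bound I would apply the triangle inequality in $L^2$ directly to $p_c(n)=\sum_{j=0}^{n-1}e^{ijc}v\circ f^j$, obtaining $\|p_c(n)\|_2\le\sum_{j=0}^{n-1}\|v\circ f^j\|_2=nb$, hence $M_c(n)\le b^2n^2$. Taking logarithms, $\log M_c(n)/\log n\le 2+\log(b^2)/\log n\to 2$, so $K_c^+\le 2$. This step is routine.

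The main work, and where I expect the obstacle, is the lower bound: a priori $M_c(n)$ could decay and drive $\log M_c(n)/\log n$ to $-\infty$, so I must rule out polynomial decay. I would do this with a one-step recursion. Expanding $p_c(n+1)=p_c(n)+e^{inc}v\circ f^n$ gives
\[
M_c(n+1)=M_c(n)+b^2+2\Re\langle p_c(n),e^{inc}v\circ f^n\rangle,
\]
and Cauchy--Schwarz together with $\|v\circ f^n\|_2=b$ bounds the cross term below by $-2b\sqrt{M_c(n)}$, yielding $M_c(n+1)\ge(\sqrt{M_c(n)}-b)^2$. The value of this inequality is that small values are self-correcting: whenever $\sqrt{M_c(n)}\le b/2$ one gets $M_c(n+1)\ge b^2/4$. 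Hence two consecutive indices cannot both satisfy $M_c<b^2/4$, so $M_c(n)\ge b^2/4$ for infinitely many $n$.

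Finally, along this subsequence $\log M_c(n)/\log n\ge\log(b^2/4)/\log n\to 0$, so $K_c^+\ge 0$, and combining with the upper bound gives $K_c^+\in[0,2]$. The only genuinely delicate point is the self-correcting recursion for the lower bound; everything else is bookkeeping.
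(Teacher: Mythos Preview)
Your proof is correct and essentially the same as the paper's: both use the one-step recursion $p_c(n+1)=p_c(n)+e^{inc}v\circ f^n$, and your Cauchy--Schwarz bound $M_c(n+1)\ge(\sqrt{M_c(n)}-b)^2$ is just the squared reverse triangle inequality $\|p_c(n+1)\|_2\ge\bigl|\|p_c(n)\|_2-b\bigr|$ that the paper invokes directly. The paper phrases the conclusion as $\|p_c(n)\|_2+\|p_c(n+1)\|_2\ge b$ (hence $\limsup M_c(n)>0$), which is equivalent to your observation that at least one of every two consecutive values satisfies $M_c\ge b^2/4$.
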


\begin{proof}
By definition, $\|p_c(n)\|_2\le n\|v\|_2$ so that 
$0\le M_c(n)\le n^2\|v\|_2^2$.   Hence $K_c^+\le 2$.

To prove the lower bound, we use the fact that $\|v\|_2>0$.
It suffices to show
that $\limsup_{n\to\infty}M_c(n)>0$ for each fixed $c$.
Observe that $p_c(n+1)=e^{inc}v\circ f^n+p_c(n)$ so that 
\[
\|p_c(n+1)\|_2 \ge \|e^{inc}v\circ f^n\|_2-\|p_c(n)\|_2 = \|v\|_2-\|p_c(n)\|_2.
\]
Hence $0<\|v\|_2\le \|p_c(n)\|_2+\|p_c(n+1)\|_2$.
It follows that $\|p_c(n)\|_2\not\to0$, and so 
$\limsup_{n\to\infty}M_c(n)>0$ as required.
\end{proof}

\section{The case of regular dynamics}
\label{sec-regular}

Part (a) of Proposition~\ref{prop-logistic} is a simple direct calculation.   If $f:X\to X$
is a map with a periodic orbit of period $q$ and $v:X\to\R$ is continuous, then we obtain
$K_c=0$ for all $c\neq 2\pi j/q$.    
(For isolated resonant values $c=2\pi j/q$ a simple argument using the
Fourier series for $v$ shows that typically $p_c(n)$ will grow
linearly implying $K_c=2$.)
In the case of quasiperiodic dynamics, we require additional
smoothness assumptions on the observable $v$.  The test then succeeds
with probability one.

\begin{thm}   \label{thm-qp}
Suppose that $X=\T^m=\R^m/(2\pi\Z)^m$ and that $f:X\to X$ is given by
$f(x)=x+\omega\bmod 2\pi$.  If $v:X\to\R$ is $C^r$ with $r>m$, then
$K_c=0$ for almost every $c\in[0,2\pi]$.
\end{thm}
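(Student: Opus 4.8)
The plan is to compute $M_c(n) = \|p_c(n)\|_2^2$ explicitly using the Fourier expansion of $v$ on the torus, exploit the fact that $f$ is a rigid rotation so that $v(f^j x) = v(x + j\omega)$, and then show that $M_c(n)$ stays bounded in $n$ for almost every $c$, forcing $K_c = 0$. Let me sketch my proof.

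\begin{proof}
Write the Fourier series $v(x)=\sum_{k\in\Z^m}\hat v_k e^{i\langle k,x\rangle}$. Since $f^j x = x+j\omega$, we have $v(f^j x)=\sum_k \hat v_k e^{i\langle k,x\rangle}e^{ij\langle k,\omega\rangle}$, and therefore
\[
p_c(n)=\sum_{j=0}^{n-1}e^{ijc}v(f^jx)=\sum_{k\in\Z^m}\hat v_k e^{i\langle k,x\rangle}\sum_{j=0}^{n-1}e^{ij(c+\langle k,\omega\rangle)}.
\]
The inner geometric sum equals $(e^{in(c+\langle k,\omega\rangle)}-1)/(e^{i(c+\langle k,\omega\rangle)}-1)$ when $c+\langle k,\omega\rangle\not\equiv0\bmod 2\pi$, and its modulus is bounded by $|\sin(n\theta_k/2)/\sin(\theta_k/2)|\le 1/|\sin(\theta_k/2)|$ where $\theta_k=c+\langle k,\omega\rangle$. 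By Remark~\ref{rmk-K}(b), $M_c(n)=\int_X|p_c(n)|^2\,d\mu$; since the characters $e^{i\langle k,x\rangle}$ are orthonormal in $L^2(\T^m)$, this gives the clean expression
\[
M_c(n)=\sum_{k\in\Z^m}|\hat v_k|^2\,\frac{\sin^2(n\theta_k/2)}{\sin^2(\theta_k/2)}
\le \sum_{k\in\Z^m}\frac{|\hat v_k|^2}{\sin^2(\theta_k/2)}.
\]
Thus it suffices to show the right-hand side is finite for almost every $c$, since a finite bound independent of $n$ immediately yields $\log M_c(n)/\log n\to0$, i.e. $K_c=0$.

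The heart of the argument is estimating $\sum_k |\hat v_k|^2/\sin^2(\theta_k/2)$. Integrating in $c$ over $[0,2\pi]$ and using $\int_0^{2\pi}\sin^{-2}((c+\alpha)/2)\,dc=\infty$, a naive application of Fubini fails because each individual term has a nonintegrable singularity; the correct approach is instead a Borel--Cantelli / small-denominators estimate. I would bound, for fixed $c$, the measure of the set of $c$ for which $|\sin(\theta_k/2)|$ is abnormally small relative to the decay of $|\hat v_k|$. Since $v\in C^r$ with $r>m$, we have $|\hat v_k|=O(|k|^{-r})$, and $\sum_k |k|^{-2r}<\infty$ precisely because $2r>2m\ge m$ (indeed $\sum_{k\in\Z^m}|k|^{-s}<\infty$ iff $s>m$, and here $s=2r>2m>m$; in fact $r>m$ already gives convergence with room to spare for the Diophantine loss). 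For each $k$, the set of $c\in[0,2\pi]$ with $\operatorname{dist}(\theta_k,2\pi\Z)<\delta$ has measure $O(\delta)$, on which $\sin^{-2}(\theta_k/2)$ is large. Choosing $\delta=\delta_k$ depending on $k$ so that $\sum_k |\hat v_k|^2\delta_k^{-2}$ and $\sum_k \delta_k$ both converge (for instance $\delta_k=|k|^{-r+\epsilon}$ with small $\epsilon>0$), Borel--Cantelli shows that for almost every $c$, only finitely many $k$ satisfy $|\sin(\theta_k/2)|<\delta_k$, and on the complementary (full-measure) set the tail $\sum_{|k|>K}|\hat v_k|^2/\sin^2(\theta_k/2)\le\sum_{|k|>K}|\hat v_k|^2\delta_k^{-2}$ converges. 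The finitely many exceptional terms contribute a finite (though $c$-dependent) amount.

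The main obstacle is precisely this small-denominators control: one must balance the smoothness of $v$ (which controls $|\hat v_k|$) against the density with which the rotated frequencies $\langle k,\omega\rangle$ can approach $-c\bmod 2\pi$. The hypothesis $r>m$ is exactly what is needed to make the two competing series converge simultaneously after the Borel--Cantelli trade-off, and I expect the proof to hinge on choosing the thresholds $\delta_k$ optimally. One subtlety to handle carefully is that the bound must be finite for \emph{almost every} $c$ rather than every $c$, which is unavoidable: for resonant or Liouville-type values of $c+\langle k,\omega\rangle$ the denominators can genuinely blow up, so the almost-everywhere conclusion is sharp.
\end{proof}
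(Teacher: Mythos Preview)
Your overall strategy---expand $v$ in Fourier series, use orthonormality of characters to obtain the closed form
\[
M_c(n)=\sum_{k\in\Z^m}|\hat v_k|^2\,\frac{\sin^2(n\theta_k/2)}{\sin^2(\theta_k/2)},\qquad \theta_k=c+\langle k,\omega\rangle,
\]
and then control the small denominators $\sin(\theta_k/2)$ for almost every $c$---is exactly the route the paper takes. The paper reaches the equivalent expression $M_c(n)=\sum_\ell|v_\ell|^2(1-\cos n\theta_\ell)/(1-\cos\theta_\ell)$ and handles the small denominators via a Diophantine condition on $c$, namely $\operatorname{dist}(c+\ell\cdot\omega,2\pi\Z)\ge d_0|\ell|^{-(m+\epsilon)}$ for all $\ell\neq0$, which holds for a.e.\ $c$. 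Your Borel--Cantelli argument with thresholds $\delta_k$ is essentially a repackaging of the same Diophantine estimate, so the two approaches are not genuinely different.

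There is, however, a concrete gap in your execution. Your suggested choice $\delta_k=|k|^{-r+\epsilon}$ with \emph{small} $\epsilon>0$ does not make both series converge: while $\sum_k\delta_k=\sum_k|k|^{-(r-\epsilon)}<\infty$ holds since $r-\epsilon>m$, the other series satisfies
\[
\sum_k|\hat v_k|^2\,\delta_k^{-2}\;\le\;C\sum_k|k|^{-2r}\,|k|^{2(r-\epsilon)}\;=\;C\sum_{k\in\Z^m}|k|^{-2\epsilon},
\]
which diverges whenever $2\epsilon\le m$. More generally, for any power-law $\delta_k=|k|^{-\alpha}$ the two requirements $\sum_k\delta_k<\infty$ and $\sum_k|\hat v_k|^2\delta_k^{-2}<\infty$ force $m<\alpha<r-m/2$, which is nonempty only when $r>3m/2$. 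So your argument as written proves the result only under the stronger hypothesis $r>3m/2$, not $r>m$. The underlying reason is that the bound you are after involves the \emph{square} of the small denominator, since $(1-\cos\theta_\ell)^{-1}=2|e^{i\theta_\ell}-1|^{-2}$. The paper's displayed estimate treats only the first power $|e^{i\theta_\ell}-1|^{-1}$ explicitly and then says ``similarly for~(2.3)''; carrying that step out with the squared denominator and the same Diophantine exponent likewise needs $r>3m/2$ (consistent with the hypothesis the paper imposes in its later Theorem~6.1). In short, your sketch is on the right track and matches the paper's method, but your asserted trade-off ``$r>m$ is exactly what is needed'' is not borne out by the arithmetic you propose.
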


\begin{proof}
Write $v:X\to\R$ as a $m$-dimensional Fourier
series $v(x)=\sum_{\ell\in\Z^m}v_\ell e^{i\ell\cdot x}$ where
$v_{-\ell}=\bar v_\ell$.  Then
\begin{align}
\nonumber M_c(n) & =\int_X|\sum_{j=0}^{n-1} e^{ijc}v\circ f^j|^2\,dx
= \sum_{p,q=0}^{n-1} e^{i(p-q)c}\int_X v\circ f^p\,v\circ f^q\,dx \\
& = \sum_{k=-(n-1)}^{n-1}(n-|k|)e^{ikc}\int_X v\circ f^{|k|}\,v\,dx 
 =s_1+\dots+s_n,
\label{eq-Msn}
\end{align}
where 
\[
s_m=\sum_{j=-(m-1)}^{m-1}e^{ijc}\int_X v\circ f^{|j|}\,v\,dx.
\]
We show that $M_c(n)$ is bounded (as a function of $n$) for almost all $c$.
Compute (formally) that
\begin{align} \nonumber
s_n & =\sum_{j=-(n-1)}^{n-1}e^{ijc}\sum_{\ell} |v_\ell|^2 
e^{ij(\ell\cdot\omega)} \\
& = \sum_{\ell}|v_\ell|^2 (e^{i(c+\ell\cdot\omega)}-1)^{-1}(e^{in(c+\ell\cdot\omega)}-e^{-i(n-1)(c+\ell\cdot\omega)}).  \label{eq-formal}
\end{align}
Hence
\begin{align} \label{eq-formal2}
s_1+\dots+s_n=\sum_{\ell}|v_\ell|^2\frac{1-\cos n(c+\ell\cdot\omega)}{1-\cos(c+\ell\cdot\omega)}.
\end{align}

It remains to show that the
series~\eqref{eq-formal},~\eqref{eq-formal2} converge.  
We may ignore the $\ell=0$ term in these series
(these terms are obviously bounded in $n$).
The smoothness assumption on $v$ implies that $|v_\ell|=O(|\ell|^{-r})$.
Let $\epsilon>0$.  For almost every $c\in(0,2\pi)$ there is a constant
$d_0>0$ such that 
\begin{align}   \label{eq-Dio}
\dist|c+\ell\cdot\omega,2\pi\Z|\ge
d_0|\ell|^{-(m+\epsilon)},
\end{align}
for all $\ell\in\Z^m-\{0\}$ (cf.~\cite{NMA01}).  
Hence $|e^{i(c+\ell\cdot\omega)}-1|\ge d_1|\ell|^{-(m+\epsilon)}$ and so
\begin{align*}
\sum_{\ell\in\Z^m-\{0\}}|v_\ell|^2|e^{i(c+\ell\cdot\omega)}-1|^{-1}
& \le \sum_{k=1}^\infty\sum_{|\ell|=k}k^{-2r}d_1^{-1}k^{m+\epsilon}
 \le C\sum_{k=1}^\infty k^{m-1}k^{-2r}k^{m+\epsilon} \\ &
=C\sum_{k=1}^\infty k^{-(1+2(r-m-\epsilon/2))}<\infty,
\end{align*}
provided we choose $\epsilon>0$ so small that $r>m+\epsilon/2$.
This shows that~\eqref{eq-formal} converges and is bounded independent
of $n$, and similarly for~\eqref{eq-formal2}.
\end{proof}

\begin{rmk}   The extra smoothness of $v$ is required to circumvent the small divisor
problems associated with quasiperiodic dynamics. We also require a
Diophantine condition on $c$, satisfied by almost every
$c\in[0,2\pi]$.  However, there is no restriction on $\omega$.
\end{rmk}

\section{The case of chaotic  dynamics}
\label{sec-chaotic}

It is our intention to show that $K_c=1$ for all almost all $c$ (and reasonable
observables $v$) for sufficiently chaotic dynamical systems.    We proceed along three
distinct but related avenues, all of which extend
Example~\ref{ex-logistic}
of the logistic map:
(i)  positivity of power spectra;
(ii)  decay of autocorrelation functions;
(iii) hyperbolicity of the dynamical system.

Recall that for a square-integrable observable $v:X\to\R$ the {\em
power spectrum} $S:[0,2\pi]\to[0,\infty)$ is defined (assuming it
exists) to be the square of the Fourier amplitudes of $v\circ f^j$ per unit
time\footnote{Often $e^{ij\omega}$ is replaced by $e^{2\pi
ij\omega/n}$ in the literature, but this is just a rescaling of the
domain.}, and is given by
\[
S(c)=\lim_{n\to\infty}\frac1n \int_X |\sum_{j=0}^{n-1}e^{ijc}v\circ f^j|^2\,d\mu
=\lim_{n\to\infty}\frac1n M_c(n).
\]
In other words, $M_c(n)=S(c)n+o(n)$. The following result is immediate:
\begin{prop}  \label{prop-power}
Let $c\in[0,2\pi]$.  Suppose that $S(c)$ is well-defined and strictly
positive.  Then $K_c=1$. \qed
\end{prop}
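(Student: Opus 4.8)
The plan is to take logarithms directly in the asymptotic relation that defines the power spectrum. By hypothesis $S(c)$ is well-defined, so $\frac1n M_c(n)\to S(c)$; equivalently $M_c(n)=S(c)n+o(n)$. Since moreover $S(c)>0$, it follows that $M_c(n)\to\infty$, and in particular $M_c(n)>0$ for all large $n$, so that $\log M_c(n)$ is defined and the quotient $\log M_c(n)/\log n$ appearing in~\eqref{eq-K} makes sense for large $n$.

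First I would write $\log M_c(n)=\log n+\log\bigl(M_c(n)/n\bigr)$, valid once $n$ is large enough that $M_c(n)>0$. Next, because $M_c(n)/n\to S(c)$ with $S(c)$ a strictly positive constant, the term $\log\bigl(M_c(n)/n\bigr)$ converges to the finite value $\log S(c)$ and hence remains bounded in $n$. Dividing through by $\log n$ then gives
\[
\frac{\log M_c(n)}{\log n}=1+\frac{\log\bigl(M_c(n)/n\bigr)}{\log n},
\]
and since the numerator of the second term is bounded while $\log n\to\infty$, that term tends to $0$. Therefore the limit $K_c$ in~\eqref{eq-K} exists and equals $1$.

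There is essentially no obstacle here, which is why the statement is flagged as immediate; the only point requiring the full strength of the hypothesis is the strict positivity of $S(c)$. This is precisely what keeps $\log\bigl(M_c(n)/n\bigr)$ bounded: were $S(c)=0$, this correction term could diverge to $-\infty$ and the argument would fail, consistent with the remark in the introduction that for summable autocorrelations one only obtains $K_c^+\in[0,1]$ when $S(c)=0$. Thus the entire content of the proof is the elementary observation that multiplicative asymptotics $M_c(n)\sim S(c)n$ with a positive constant translate, after taking logarithms, into the exponent $K_c=1$.
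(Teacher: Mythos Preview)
Your argument is correct and is exactly the elementary computation the paper has in mind when it labels the result immediate (note the \qed\ at the end of the statement and the absence of any proof in the paper). The key point---that $M_c(n)/n\to S(c)>0$ makes $\log(M_c(n)/n)$ bounded, so $\log M_c(n)/\log n\to 1$---is the entire content, and you have stated it precisely.
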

In particular, if the power spectrum is well-defined and positive
almost everywhere, then we obtain $K_c=1$ with probability one.

Next, we consider the {\em autocorrelation function} $\rho:\N\to\R$ given by
\[
\rho(k)=\int_X v\circ f^k\,v\,d\mu-\Bigl(\int_X v\,d\mu\Bigr)^2.
\]
This is well-defined for all $L^2$ observables $v$.

If $\rho(k)$ is summable (i.e.\ $\sum_{k=0}^\infty|\rho(k)|<\infty$),
then it follows from the Wiener-Khintchine theorem~\cite{VanKampen}
that for $c\in(0,2\pi)$,
\[
S(c)=\sum_{k=-\infty}^\infty e^{ikc}\rho(|k|).
\]
Note that the right-hand-side defines a continuous function on
$[0,2\pi]$.

\begin{prop}   \label{prop-exp}
Suppose that $v:X\to\R$ lies in $L^2(X)$ and $v$ is not constant
(almost everywhere).  If the autocorrelation function $\rho(k)$ decays
exponentially\footnote{There exist constants $C\ge1$, $\tau\in(0,1)$
such that $|\rho(k)|\le C\tau^k$}, then $K_c=1$ except for at most
finitely many choices of $c\in[0,2\pi]$.
\end{prop}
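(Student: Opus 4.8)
The plan is to reduce the statement to Proposition~\ref{prop-power} by showing that the power spectrum $S(c)$ is strictly positive for all but finitely many $c$, the engine being that exponential decay of $\rho$ promotes $S$ from merely continuous to \emph{real-analytic}, after which finiteness of its zero set is automatic. First I would note that exponential decay implies summability, so the Wiener--Khintchine theorem quoted above gives, for each $c\in(0,2\pi)$,
\[
S(c)=\sum_{k=-\infty}^\infty e^{ikc}\rho(|k|)=\rho(0)+2\sum_{k=1}^\infty\rho(k)\cos kc,
\]
where I use that $v$ is real, so that $\rho$ is real. Since $S(c)=\lim_{n\to\infty}\frac1n M_c(n)$ is a limit of nonnegative numbers, $S(c)\ge0$.

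The step carrying the real content is to upgrade this to analyticity using the bound $|\rho(k)|\le C\tau^k$ with $\tau\in(0,1)$. Writing $z=e^{ic}$, the Laurent series $\rho(0)+\sum_{k\ge1}\rho(k)(z^k+z^{-k})$ converges absolutely and uniformly on the annulus $\tau<|z|<\tau^{-1}$ and so defines a holomorphic function there; composing with $c\mapsto e^{ic}$, which maps the strip $|\Im c|<\log\tau^{-1}$ into that annulus, shows that the right-hand side above extends to a holomorphic function of $c$ on the strip. Restricting to real $c$ produces a $2\pi$-periodic real-analytic function that agrees with $S$ on $(0,2\pi)$ and is nonnegative there, hence by continuity on all of $[0,2\pi]$. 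A real-analytic function on the circle is either identically zero or has only isolated, and thus finitely many, zeros in the compact set $[0,2\pi]$.

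It then remains to exclude the identically-zero case, which is where non-constancy of $v$ enters: if $S\equiv0$, uniqueness of Fourier coefficients forces $\rho(|k|)=0$ for every $k$, and in particular $\rho(0)=\int_X v^2\,d\mu-(Ev)^2=\var(v)=0$, so $v$ is constant almost everywhere, contrary to hypothesis. Hence $S$ has at most finitely many zeros in $[0,2\pi]$; adjoining the endpoints $\{0,2\pi\}$ if necessary, at every remaining $c$ we have $S(c)>0$, and Proposition~\ref{prop-power} yields $K_c=1$. I expect the analyticity argument to be the only subtle point; the rest is bookkeeping, and the one thing to watch is that it is the nonnegativity of $S$ that lets the isolated zeros be discarded as a finite exceptional set rather than forcing an analysis of the dynamics at those resonances.
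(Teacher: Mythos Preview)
Your proof is correct and follows essentially the same approach as the paper: exponential decay of $\rho$ makes $S(c)=\sum_{k}e^{ikc}\rho(|k|)$ real-analytic, $\rho(0)=\var(v)>0$ rules out $S\equiv0$, and then analyticity plus nonnegativity leave only finitely many zeros, after which Proposition~\ref{prop-power} finishes. You are simply more explicit than the paper about the analyticity argument and about the role of $S\ge0$.
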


\begin{proof}
Since $\rho(k)$ decays exponentially,
$g(c)=\sum_{k=-\infty}^\infty e^{ikc}\rho(|k|)$ is analytic on $[0,2\pi]$.  
Since $v$ is not constant, 
$\rho(0)=\int_X v^2\,d\mu-(\int_X v\,d\mu)^2=\int_X(v-\int_X v)^2\,d\mu>0$, 
and hence $g$ is not the
zero function.  By analyticity, $S(c)=g(c)>0$ except for at most
finitely many values of $c$ and hence $K_c=1$ except for these values of $c$.
\end{proof}

\paragraph{Decay of autocorrelations up to a finite cycle}
Recall that $f:X\to X$ is {\em mixing} if $\lim_{k\to\infty}\rho(k)
\to0$ for every $L^2$ observable $v:X\to\R$.
The system is {\em mixing up to a finite cycle} (of length $q\ge1$) if
$X=X_1\cup\dots\cup X_q$ where $f(X_j)\subset X_{j+1}$ (computing
indices $\bmod\, q$) and $f^q:X_j\to X_j$ is mixing (with respect to
$\mu_q=q\mu|X_j$) for each $j=1,\dots,q$.

If $q\ge2$, then decay of autocorrelations holds only for
degenerate observables.   The natural property to require is exponential 
decay for $f^q$.
Given an $L^2$ observable $v:X\to\R$, define 
for $j=1,\dots,q$ and $m=0,\dots,q-1$, 
\[
\rho_{v\circ f^m,v,X_j}(kq)=\int_{X_j} v\circ f^m\circ f^{kq}\, v\,d\mu_j
-\int_{X_j}v\circ f^m\,d\mu_j \int_{X_j}v\,d\mu_j.
\]

\begin{defn} \label{def-expcycle}
The autocorrelations of $v$ are {\em summable up to a $q$ cycle} if 
for each $j=1,\dots,q$ and $m=0,\dots,q-1$, the series 
$\sum_{k=0}^\infty|\rho_{v\circ f^m,v,X_j}(kq)|$ is convergent.

The autocorrelations of $v$ {\em decay exponentially up to a $q$ cycle} if 
$\rho_{v\circ f^m,v,X_j}(kq)$ decays exponentially as $k\to\infty$
for each $j=1,\dots,q$ and $m=0,\dots,q-1$.
\end{defn}

\begin{thm} \label{thm-expcycle}
If the autocorrelations of $v$ are summable up to a $q$ cycle, then 
\[
S(c)=\sum_{r=-\infty}^\infty e^{irc}g_r, \enspace\text{for all  
$c\neq 2\pi j/q$},
\]
where writing $r=kq+m$ with $k\in\Z$ and $m\in\{0,1,\dots,q-1\}$,
\[
g_r=g_{kq+m}=\sum_{j=1}^q \rho_{v\circ f^m,v,X_j}(kq).
\]
\end{thm}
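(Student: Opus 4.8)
The plan is to start from the power-spectrum formula $S(c)=\lim_{n\to\infty}\frac1n M_c(n)$ and expand $M_c(n)$ exactly as in the computation leading to~\eqref{eq-Msn}, which uses only invariance of $\mu$. Writing $R(r)=\int_X v\circ f^{|r|}\,v\,d\mu$ for the (uncentred) autocorrelation, so that $R$ depends only on $|r|$ and is real, one gets
\[
\frac1n M_c(n)=\sum_{r=-(n-1)}^{n-1}\Bigl(1-\tfrac{|r|}{n}\Bigr)e^{irc}R(r),
\]
a Fej\'er-weighted two-sided sum whose limit I need to identify.

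The key step is to decompose $R$ according to the cyclic structure. For $r=kq+m$ with $m\in\{0,\dots,q-1\}$, I split $X=X_1\cup\dots\cup X_q$ and use that $f^{kq}$ maps each $X_j$ into itself while $f^m$ maps $X_j$ into $X_{j+m}$, together with the normalisation $\mu_j=q\mu|X_j$, to rewrite $\int_{X_j}v\circ f^{kq+m}\,v\,d\mu$ in terms of $\rho_{v\circ f^m,v,X_j}(kq)$. Collecting the $q$ pieces produces a splitting of the form
\[
R(kq+m)=\kappa\,g_{kq+m}+\kappa\,P_m,\qquad P_m=\sum_{j=1}^q\Bigl(\int_{X_j}v\circ f^m\,d\mu_j\Bigr)\Bigl(\int_{X_j}v\,d\mu_j\Bigr),
\]
where $\kappa$ is the constant arising from $\mu_j=q\mu|X_j$. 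The point is that the first term is summable in $r$ while the second depends on $r$ only through $m=r\bmod q$, hence is periodic in $r$ with period $q$.

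Substituting this splitting, the two pieces behave completely differently in the limit. For the summable piece, the hypothesis that autocorrelations are summable up to a $q$-cycle gives $\sum_{r}|g_r|\le\sum_{m=0}^{q-1}\sum_{k}\sum_{j=1}^q|\rho_{v\circ f^m,v,X_j}(kq)|<\infty$, so dominated convergence lets the Fej\'er weights $1-|r|/n$ pass to $1$ and yields the series $\sum_{r=-\infty}^{\infty}e^{irc}g_r$ (and in particular shows the limit exists). For the periodic piece I expand $P_{r\bmod q}$ in discrete Fourier modes $e^{2\pi isr/q}$, so that the weighted sum becomes a finite combination of Fej\'er kernels $F_n(c+2\pi s/q)=\frac1n\bigl|\sum_{j=0}^{n-1}e^{ij(c+2\pi s/q)}\bigr|^2$ for $s=0,\dots,q-1$. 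Each kernel is bounded by $\bigl(n\sin^2(\tfrac12(c+2\pi s/q))\bigr)^{-1}$ and hence tends to $0$ as $n\to\infty$ precisely when $c+2\pi s/q\not\equiv0\pmod{2\pi}$, which is exactly the standing assumption $c\neq2\pi j/q$. Thus the periodic piece contributes nothing in the limit, leaving the claimed identity.

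The crux of the argument — and the only place the hypothesis $c\neq2\pi j/q$ enters — is this vanishing of the resonant periodic contribution: at a resonant value $c=2\pi j/q$ one of the frequencies $c+2\pi s/q$ is an integer multiple of $2\pi$, its Fej\'er kernel equals $n$ rather than $o(1)$, and the periodic part then grows linearly (the mechanism behind the value $K_c=2$ noted earlier for periodic orbits). Two further points require routine care rather than ingenuity: verifying the index bookkeeping for negative $r$ via the symmetry $R(-r)=R(r)$, so that the two-sided sum matches the stated form of $g_r$ for $k\in\Z$; and tracking the normalisation constant $\kappa$ coming from $\mu_j=q\mu|X_j$ so as to land on exactly the stated identity $S(c)=\sum_{r=-\infty}^{\infty}e^{irc}g_r$. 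Neither is delicate, but both must be checked to pin down the formula.
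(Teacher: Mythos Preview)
Your approach is correct and takes a genuinely different route from the paper. The paper does not work from the Fej\'er-weighted expression for $M_c(n)/n$; instead it invokes a result from~\cite{MG08} giving $S(c)=\sum_{k}e^{ikqc}\rho_{q,c}(k)$, where $\rho_{q,c}$ is the autocorrelation (averaged over the $X_j$) of the block observable $v_{q,c}=\sum_{\ell=0}^{q-1}e^{i\ell c}v\circ f^\ell$ under the map $f^q$. The rest of the paper's argument is purely algebraic: expand $\rho_{q,c}(k)$ as a double sum over $\ell,\ell'$, shift indices using $f^{\ell'}$-invariance to reduce to $\rho_{v\circ f^s,v,X_j}(kq)$, and read off the coefficient of $e^{irc}$ (which receives contributions from two adjacent $k$-values that combine via the identity $(q-m)+m=q$). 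The resonance analysis and the hypothesis $c\neq2\pi j/q$ are entirely hidden inside the cited theorem. Your decomposition into a summable part plus a $q$-periodic part, by contrast, handles everything from first principles: dominated convergence for the summable piece and the explicit Fej\'er-kernel bound for the periodic piece make both the existence of $S(c)$ and the role of the nonresonance condition transparent. Your method is more self-contained; the paper's is shorter given the external reference, and its index-matching treats both signs of $r$ at once, whereas you must verify the symmetry between $g_{-r}$ and $g_{|r|}$ separately.

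One caution on the normalisation you flag as routine: carrying it through gives $\kappa=1/q$, so your argument produces $S(c)=\tfrac1q\sum_r e^{irc}g_r$ rather than the displayed formula. A direct computation in a simple case (take $q=2$, $v$ mean-zero on each $X_j$ with $\int_{X_j}v^2\,d\mu_j=1$ and all nontrivial correlations vanishing: then $S(c)=\int_X v^2\,d\mu=1$ while $g_0=2$) confirms the extra $1/q$, so the discrepancy lies with the stated constant, not with your method.
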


\begin{proof}
Define
\[
v_{q,c}=\sum_{\ell=0}^{q-1}e^{i\ell c}v\circ f^\ell, \quad
\rho_{q,c}(k)=\frac1q\sum_{j=1}^q\Bigl(\int_{X_j} v_{q,c}\circ f^{kq}\bar v_{q,c}\,d\mu_j-
\Bigl|\int_{X_j} v_{q,c}\,d\mu_j\Bigr|^2\Bigr).
\]
By~\cite[Theorem~A.2]{MG08},
\[
S(c)=\sum_{k=-\infty}^\infty e^{ikqc}\rho_{q,c}(k),
\]
for $c\neq 2\pi j/q$.
Compute that
\begin{align*}
\rho_{q,c}(k) & = \frac1q\sum_{j=1}^q \Bigl(\int_{X_j}v_{q,c}\circ f^{kq}\,\bar v_{q,c}\,d\mu_j - \Bigl|\int_{X_j}v_{q,c}\,d\mu_j\Bigr|^2\Bigr) \\
& =\frac1q \sum_{j=1}^q \sum_{\ell,\ell'=0}^{q-1} e^{i(\ell-\ell')c}\Bigl(\int_{X_j}
v\circ f^\ell\circ f^{kq}\, v\circ f^{\ell'}\,d\mu_j -
\int_{X_j}v\circ f^\ell\,d\mu_j \int_{X_j}v\circ f^{\ell'}\,d\mu_j \Bigr)
\\
& =\frac1q \sum_{\ell,\ell'=0}^{q-1}e^{i(\ell-\ell')c}\sum_{j=1}^q \Bigl(
\int_{X_{j+\ell'}} v\circ f^{\ell-\ell'}\circ f^{kq}\, v\,d\mu_{j+\ell'}
-
\int_{X_{j+\ell'}}v\circ f^{\ell-\ell'}\,d\mu_{j+\ell'} \int_{X_{j+\ell'}}v\,d\mu_{j+\ell'}\Bigr)
\\
& =\frac1q \sum_{\ell,\ell'=0}^{q-1}e^{i(\ell-\ell')c} \sum_{j=1}^q\Bigl(
\int_{X_j} v\circ f^{\ell-\ell'}\circ f^{kq}\, v\,d\mu_j 
-\int_{X_j}v\circ f^{\ell-\ell'}\,d\mu_j \int_{X_j}v\,d\mu_j\Bigr)
\\
& =\frac1q \sum_{j=1}^q\sum_{s=-(q-1)}^{q-1} (q-|s|)e^{isc}\Bigl(\int_{X_j}
v\circ f^s\circ f^{kq}\, v\,d\mu_j
-\int_{X_{j}}v\circ f^s\,d\mu_{j}\int_{X_j}v\,d\mu_j\Bigr) \\
& =\frac1q \sum_{j=1}^q\sum_{s=-(q-1)}^{q-1} (q-|s|)e^{isc}\rho_{v\circ f^s,v,X_j}(kq).
\end{align*}

For $r=kq+m$ we obtain at most two nonzero contributions to $g_r$, namely
$s=m$ in $\rho_{q,c}(k)$ and $s=-(q-m)$ in $\rho_{q,c}(k+1)$.   Hence
\begin{align*}
g_r &=\frac1q\sum_{j=1}^q\Bigl((q-m)\rho_{v\circ f^m,v,X_j}(kq)+m\rho_{v\circ f^{-(q-m)},v,X_j}(kq+q)\Bigr) 
  = \sum_{j=1}^q \rho_{v\circ f^m,v,X_j}(kq)
\end{align*}
as required.
\end{proof}

\begin{cor} \label{cor-expcycle}
If the autocorrelations of $v$ are summable up to a $q$ cycle, then 
$S(c)$ exists and is continuous except for removable
singularities at $c= 2\pi j/q$.  

If the autocorrelations of $v$ decay exponentially up to a $q$ cycle,
then $S(c)$ is analytic except for removable singularities at $c= 2\pi j/q$.
If moreover $v|X_j$ is not constant (almost everywhere) for at least one $j$,
then $K_c=1$ except for at most finitely many values of $c$.
\end{cor}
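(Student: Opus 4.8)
The plan is to deduce all three assertions from the Fourier representation $S(c)=\sum_{r=-\infty}^{\infty}e^{irc}g_r$ furnished by Theorem~\ref{thm-expcycle} for $c\neq2\pi j/q$. First I would show that the decay hypotheses force this series to converge, uniformly (resp.\ on a complex strip), to a continuous (resp.\ analytic) function $g$ on all of $[0,2\pi]$; since $g$ coincides with $S$ away from the resonant frequencies, this gives the removable singularities in both parts. Then I would read off the positivity of $S$ from the non-negativity of the power spectrum together with a computation of the mean coefficient $g_0$.

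To control the coefficients I would first record that each $g_r$ is real, since $v$ is real-valued, and that they satisfy the symmetry $g_{-r}=g_r$. This symmetry is the one point requiring care: the coefficient $g_r$ for $r=kq+m$ with $k\ge0$ is governed directly by Definition~\ref{def-expcycle}, but the negative-index coefficients involve correlations $\rho_{v\circ f^m,v,X_j}(kq)$ with $k<0$, i.e.\ negative time. Using that $f^{q}$ preserves each $\mu_j$ and that the change of variables $y=f^m x$ carries $\mu_j$ to $\mu_{j+m}$, a short computation reindexes such a correlation to $\rho_{v\circ f^{q-m},v,X_{j'}}(k'q)$ with $k'\ge0$; summing over $j$ yields $g_{-r}=g_r$. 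Granted this, the summability of $\sum_{k\ge0}|\rho_{v\circ f^m,v,X_j}(kq)|$ gives $\sum_r|g_r|<\infty$, so the trigonometric series converges uniformly and $g$ is continuous on $[0,2\pi]$; and exponential decay up to a cycle gives $|g_r|\le C\tau^{|r|}$ for some $\tau\in(0,1)$, so $\sum_r e^{irc}g_r$ converges and is analytic on the strip $|\Im c|<\log(1/\tau)$. In either case $g=S$ off the resonances, so the singularities there are removable.

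For the last assertion I would combine analyticity with positivity. As a limit of the non-negative quantities $\frac1nM_c(n)$, the power spectrum satisfies $S(c)\ge0$, and hence $g\ge0$ on $[0,2\pi]$ by continuity. To see that $g$ does not vanish identically, I would compute the mean coefficient $g_0=\sum_{j=1}^q\rho_{v,v,X_j}(0)=\sum_{j=1}^q\var_{\mu_j}(v)$, a sum of variances of $v|X_j$; every term is non-negative, and the hypothesis that $v|X_j$ is non-constant for at least one $j$ makes one term strictly positive, so $g_0>0$. Since $g_0=\frac1{2\pi}\int_0^{2\pi}g\,dc$, the analytic function $g$ is not identically zero and therefore has only finitely many zeros in the compact interval $[0,2\pi]$. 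Away from these zeros and from the finitely many resonant values $2\pi j/q$ we have $S(c)=g(c)\neq0$ with $S(c)\ge0$, hence $S(c)>0$, and Proposition~\ref{prop-power} yields $K_c=1$. Thus $K_c=1$ except for at most finitely many $c$.

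The main obstacle is the symmetry step $g_{-r}=g_r$: it is the only place where the one-sided decay assumption of Definition~\ref{def-expcycle} must be matched against a two-sided Fourier series, and it genuinely uses the cyclic structure $f(X_j)\subset X_{j+1}$ and the invariance of the measures $\mu_j$. Once the coefficients are under control, the passage to a continuous or analytic $g$ and the zero-counting via $g_0>0$ are routine.
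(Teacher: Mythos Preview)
Your proof is correct and follows the same route as the paper: deduce continuity/analyticity of $g(c)=\sum_r e^{irc}g_r$ from the decay of the $g_r$, then use $g_0=\sum_j\var_{\mu_j}(v)>0$ to conclude $g\not\equiv0$ and hence has only finitely many zeros. Your explicit treatment of the symmetry $g_{-r}=g_r$ (needed to pass from the one-sided decay in Definition~\ref{def-expcycle} to two-sided control of the Fourier coefficients) and of the non-negativity $S(c)\ge0$ fills in details that the paper's proof leaves implicit.
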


\begin{proof}
The statements about continuity and analyticity are immediate
from Theorem~\ref{thm-expcycle}.
In particular, if there is exponential decay up
to a $q$ cycle, then the function $g(c)=\sum_{r=-\infty}^\infty e^{irc}g_r$ is
analytic and hence nonzero except at finitely many points provided
$g_0\neq0$.
If on the other hand, $g_0=0$, then
\[
0=g_0=\sum_{j=1}^q \rho_{v,v,X_j}(0)=\sum_{j=1}^q \int_{X_j}v^2\,d\mu_j
-\Bigl(\int_{X_j}v\,d\mu_j\Bigr)^2
=\sum_{j=1}^q \var(v|X_j),
\]
so $\var(v|X_j)=0$, and hence $v|X_j$ is constant, for each $j$.
\end{proof}

\begin{rmk} \label{rmk-expcycle}   Definition~\ref{def-expcycle} and 
Theorem~\ref{thm-expcycle}
are significant improvements on the corresponding material
in~\cite[Appendix]{MG08}.
\end{rmk}

\begin{rmk}   \label{rmk-exp}
We have seen that exponential decay of autocorrelations (up to a $q$ cycle)
guarantees that $K_c=1$ with probability one.
Surprisingly, it seems nontrivial to weaken the exponential decay hypothesis.
The proof of Proposition~\ref{prop-exp}
relies crucially on analyticity of the power spectrum.
Even if we assume sufficiently rapid decay that
$g(c)=\sum_{k=-\infty}^\infty e^{ikc}\rho(|k|)$ is $C^\infty$, then 
we face the difficulty that the only restriction on the zero set of a 
$C^\infty$ function is that it is a closed set.

Suppose that summable decay of correlations holds for a large class of
observables $\mathcal{B}$ with the property that there is an interval
$I\subset(0,2\pi)$ such that the Fourier series $g(c)$ is identically
zero on $I$ for all $v\in\mathcal{B}$.  The proof of
Proposition~\ref{prop-exp} shows that for every nonconstant observable
$v\in\mathcal{B}$, there is an interval $J\subset(0,2\pi)$ on which
$g(c)>0$ on $J$.  For such examples, where the power spectrum vanishes
on an interval $I$ and is typically positive on an interval $J$, the
$0$--$1$ test is inconclusive: we obtain $K_c=1$ with
positive probability for nonconstant observables in $\mathcal{B}$,
but for all $v\in\mathcal{B}$ there is a positive probability
that $K_c\in[0,1)$ or that $K_c$ does not even exist.  
This situation seems highly
pathological, but we do not see how to rule this out.
\end{rmk}

\paragraph{Hyperbolicity}
We can overcome the unsatisfactory aspects of Remark~\ref{rmk-exp} by
assuming some hyperbolicity.  In the Collet-Eckmann case
($a\in\mathcal{C}$) for the logistic map, it is known that H\"older
observables enjoy exponential decay of correlations up to a finite
cycle, so we can apply Theorem~\ref{thm-expcycle}.
Alternatively,~\cite{MG08} shows that the power spectrum is bounded
away from zero for all $c\neq 2\pi j/q$, so we can apply
Proposition~\ref{prop-power}.  These comments apply to all maps in the
following classes:
\begin{itemize}

\item   Uniformly expanding maps; Uniformly hyperbolic (Axiom~A)
diffeomorphisms.
\item   Nonuniformly expanding/hyperbolic systems in the sense of
Young~\cite{Young98}, modelled by a Young tower with exponential
tails. These enjoy exponential decay of correlations (up to a finite
cycle) for H\"older observables.  This covers large classes of
dynamical systems, including H\'enon-like maps, logistic maps and more
generally multimodal maps satisfying Collet-Eckmann conditions, and
one-dimensional maps with Lorenz-like singularities~\cite{DHL}.
\end{itemize}

Young~\cite{Young99} weakens the decay rates assumed for tower models
for nonuniformly expanding/hyperbolic systems. H\"older observables
now have subexponential decay of correlations (up to a finite cycle).
Provided the decay rate is summable, the argument of~\cite{MG08} still
applies, and $S(c)$ is bounded away from zero (except for infinitely
degenerate observables).

\begin{example}
A prototypical family of examples is the Pomeau-Manneville
intermittency maps $f:[0,1]\to[0,1]$ given by $f(x)=\begin{cases}
x(1+2^\alpha x^\alpha); & 0\le x\le \frac12 \\ 2x-1; & \frac12\le
x\le1 \end{cases}$ where $\alpha\in[0,1)$ is a
parameter~\cite{PM,LSV}.  When $\alpha=0$ this is the doubling map
with exponential decay of correlations for H\"older observables, so
Proposition~\ref{prop-exp} applies.  For $\alpha>0$, let
$\beta=\frac{1}{\alpha}-1$.  Then decay of correlations for H\"older
observables is at the rate $n^{-\beta}$~\cite{Hu}.  By~\cite{MG08},
the power spectrum is bounded below in the summable case ($\beta>1$,
equivalently $\alpha<\frac12$) and so $K_c=1$ for all $c\in(0,2\pi)$.
\end{example}

\begin{rmk}  Numerical experiments for intermittency maps
indicate that (i) $K_c=1$ and (ii) the power spectrum exists and
is bounded below, even in the nonsummable case $\beta\le 1$,
equivalently $\alpha\in[\frac12,1)$.  It remains an interesting
problem to prove these statements.
By Corollary~\ref{cor-d>.5}, we are at least assured that the power
spectrum exists (and hence $K_c=1$ with positive probability) for
$\beta>\frac12$ ($\alpha<\frac23$).
\end{rmk}

\section{Summable decay without hyperbolicity}
\label{sec-summable}

Recall that if the autocorrelation function is summable
(i.e. $\sum_{k=1}^\infty|\rho(k)|<\infty$), then the power spectrum
$S(c)$ exists and is continuous for all $c\in(0,2\pi)$.   Indeed,
$S(c)=\sum_{k=-\infty}^\infty e^{ikc}\rho(|k|)$ on $(0,2\pi)$ by
the Wiener-Khintchine Theorem~\cite{VanKampen}. 
Hence $M_c(n)=S(c)n+o(n)$.   As noted in Remark~\ref{rmk-exp},
$S(c)$ is not identically zero for nonconstant observables, and hence there
is an interval of values of $c$ for which $S(c)>0$.  
In particular, $K_c=1$ with positive probability.

In this section, we discuss the error term $o(n)$ in more detail.
As mentioned in the introduction, this leads to the improved diagnostic
$D_c(n)$ for chaos used in~\cite{GMapp}.

We begin with a formal calculation to express the mean square
displacement as follows:
\begin{prop} \label{prop-calc}
$M_c(n)= 
\sum_{k=-n}^n(n-|k|)e^{ikc}\rho(|k|)+ (Ev)^2\,\frac{1-\cos nc}{1-\cos c}$.
\end{prop}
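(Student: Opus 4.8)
The plan is to reduce everything to a finite sum for each fixed $n$, so that the computation is entirely elementary and no convergence issues arise (this is why the calculation can be called ``formal'' yet is in fact rigorous). I would start from Remark~\ref{rmk-K}(b), which identifies $M_c(n)=\|p_c(n)\|_2^2=\int_X|\sum_{j=0}^{n-1}e^{ijc}\,v\circ f^j|^2\,d\mu$. Expanding the modulus squared, and using that $v$ is real-valued, gives the double sum $\sum_{p,q=0}^{n-1}e^{i(p-q)c}\int_X v\circ f^p\,v\circ f^q\,d\mu$, exactly as in the first line of~\eqref{eq-Msn}.

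Next I would invoke invariance of $\mu$ under $f$ to write $\int_X v\circ f^p\,v\circ f^q\,d\mu=\int_X v\circ f^{|p-q|}\,v\,d\mu$, which depends only on the lag $k=p-q$. Collecting the pairs $(p,q)$ with $p-q=k$ --- there are exactly $n-|k|$ of them --- converts the double sum into $\sum_{k=-(n-1)}^{n-1}(n-|k|)e^{ikc}\int_X v\circ f^{|k|}\,v\,d\mu$, as in the third expression of~\eqref{eq-Msn}. I would then substitute the defining relation $\int_X v\circ f^{|k|}\,v\,d\mu=\rho(|k|)+(Ev)^2$ and split the sum into two pieces. Since the weight $n-|k|$ vanishes at $k=\pm n$, the range of the autocorrelation piece may be harmlessly extended to $-n\le k\le n$, producing the first term of the claimed identity.

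The only remaining point is to evaluate the pure-mean contribution $(Ev)^2\sum_{k=-(n-1)}^{n-1}(n-|k|)e^{ikc}$. I would recognise this as a Fej\'er kernel: reversing the counting step above shows it equals $|\sum_{j=0}^{n-1}e^{ijc}|^2$, and summing the geometric series (legitimate since $c\in(0,2\pi)$, so $e^{ic}\neq1$) gives $|e^{inc}-1|^2/|e^{ic}-1|^2=(1-\cos nc)/(1-\cos c)$. This yields the second term and closes the proof. I do not anticipate a genuine obstacle: for each fixed $n$ everything is a finite sum, and the only care required is the bookkeeping of lag multiplicities $n-|k|$, the harmless extension of the summation range, and the observation that the geometric-series step uses only $c\neq0$.
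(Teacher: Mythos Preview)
Your proposal is correct and follows essentially the same route as the paper's own proof: expand $M_c(n)=\int_X|p_c(n)|^2\,d\mu$ as a double sum, use $f$-invariance to reduce to lags, substitute $\int_X v\circ f^{|k|}\,v\,d\mu=\rho(|k|)+(Ev)^2$, and evaluate the mean contribution via the geometric series identity $\bigl|\sum_{j=0}^{n-1}e^{ijc}\bigr|^2=(1-\cos nc)/(1-\cos c)$. The only cosmetic difference is ordering: the paper substitutes $\rho+(Ev)^2$ before collecting by lag (keeping the $(Ev)^2$ part as the factored double sum), whereas you collect by lag first and then re-expand the mean contribution back to the product form; the content is identical.
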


\begin{proof}
First, note that
\begin{align*}
M_c(n)=\int_X |p_c(n)|^2\, d\mu
& =\sum_{p,q=0}^{n-1}e^{i(p-q)c}\int_X v\circ f^p\,v\circ f^q\, d\mu \\ 
& =\sum_{p,q=0}^{n-1}e^{i(p-q)c}\int_X v\circ f^{|p-q|}\,v\, d\mu \\ 
& =\sum_{p,q=0}^{n-1}e^{i(p-q)c}(\rho(|p-q|)+(Ev)^2) \\
& =\sum_{k=-n}^n(n-|k|)e^{ikc}\rho(|k|)+
(Ev)^2\sum_{p=0}^{n-1}e^{ipc} \sum_{q=0}^{n-1}e^{-iqc}
\end{align*}
Finally
\begin{align*}
\sum_{p=0}^{n-1}e^{ipc} \sum_{q=0}^{n-1}e^{-iqc}
=\frac{1-e^{inc}}{1-e^{ic}}\,\frac{1-e^{-inc}}{1-e^{-ic}}
=\frac{1-\cos nc}{1-\cos c}.
\end{align*}

\vspace{-4ex}
\end{proof}

The second term in the expression for $M_c(n)$ is bounded in $n$ for fixed
$c$, but is nonuniform in $c$.   Since the term is explicit, it is convenient
to remove it. (As demonstrated in~\cite{GMapp}, this is also greatly
advantageous for the numerical implementation of the test.)
Hence we define
\[
D_c(n)=M_c(n)-(Ev)^2(1-\cos nc)/(1-\cos c)\; .
\]
Then $M_c(n)=D_c(n)+O(1)$, so it suffices to work with $D_c(n)$ from
now on. By Proposition~\ref{prop-calc},
\begin{align} \label{eq-D}
D_c(n)=\sum_{k=-n}^n(n-|k|)e^{ikc}\rho(|k|).
\end{align}

\begin{thm} \label{thm-d>1}
Suppose that $\rho(k)$ is summable
(i.e. $\sum_{k=1}^\infty|\rho(k)|<\infty$). Then
for all $c\in(0,2\pi)$,
\[
D_c(n)= S(c) n + e(c,n),
\]
where 
\[
|e(c,n)|\le 2n\sum_{k=n+1}^\infty |\rho(k)|
+2\sum_{k=1}^n k |\rho(k)|=o(n).
\]
In particular $D_c(n)=S(c)n+o(n)$ uniformly in $c$.
\end{thm}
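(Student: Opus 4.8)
The plan is to start from the explicit identity~\eqref{eq-D} for $D_c(n)$ and subtract the candidate leading term $S(c)n$, where $S(c)=\sum_{k=-\infty}^\infty e^{ikc}\rho(|k|)$ is the convergent power spectrum supplied by summability. The coefficient $n-|k|$ in~\eqref{eq-D} differs from the constant coefficient $n$ implicit in $S(c)n$ in exactly two ways: the range is truncated to $|k|\le n$, and the weight is reduced from $n$ to $n-|k|$. The whole argument consists of controlling these two discrepancies using only $\sum_k|\rho(k)|<\infty$.

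First I would record the identity
\[
e(c,n)=D_c(n)-S(c)n
=-n\sum_{|k|>n}e^{ikc}\rho(|k|)-\sum_{|k|\le n}|k|\,e^{ikc}\rho(|k|),
\]
obtained by splitting $S(c)n$ into the ranges $|k|\le n$ and $|k|>n$ and combining the former with~\eqref{eq-D}. The first sum is the tail discarded by the truncation, and the second comes from the weight defect $(n-|k|)-n=-|k|$. Taking absolute values in each sum, and using that $\rho(|k|)$ depends only on $|k|$ so that negative and positive indices fold together (the $k=0$ term dropping out of the second sum), yields precisely the asserted bound
\[
|e(c,n)|\le 2n\sum_{k=n+1}^\infty|\rho(k)|+2\sum_{k=1}^n k\,|\rho(k)|.
\]
Since this bound does not involve $c$, uniformity in $c$ will follow automatically once each term is shown to be $o(n)$.

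It then remains to divide by $n$ and let $n\to\infty$. The first term contributes $2\sum_{k=n+1}^\infty|\rho(k)|$, which tends to $0$ as the tail of a convergent series. The one genuinely nontrivial point, and the step I expect to be the main obstacle, is showing that $\frac1n\sum_{k=1}^n k\,|\rho(k)|\to0$: this is not a termwise estimate, since $k|\rho(k)|$ need not even be bounded, but rather an instance of Kronecker's lemma. I would establish it by Abel summation, writing $R_m=\sum_{k>m}|\rho(k)|$ so that $|\rho(k)|=R_{k-1}-R_k$; the sum then telescopes to
\[
\sum_{k=1}^n k\,|\rho(k)|=\sum_{j=0}^{n-1}R_j-nR_n\le\sum_{j=0}^{n-1}R_j.
\]
Dividing by $n$ exhibits the right-hand side as the Ces\`aro mean of the null sequence $(R_j)$, which therefore tends to $0$. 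Combining the two estimates gives $|e(c,n)|=o(n)$ with a rate independent of $c$, which is the uniform conclusion claimed.
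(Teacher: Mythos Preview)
Your proof is correct and follows essentially the same route as the paper: the decomposition of $e(c,n)$ into the tail term and the weight-defect term is identical, and the only nontrivial step---showing $\frac1n\sum_{k=1}^n k|\rho(k)|\to0$---is handled by the same Ces\`aro/Kronecker idea. The paper phrases that last step with partial sums $s_n=\sum_{k\le n}|\rho(k)|$ (computing $\frac1n\sum k|\rho(k)|=\frac{n+1}{n}s_n-\sigma_n\to L-L=0$) rather than with tails $R_m$, but this is a cosmetic variant of the same argument.
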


\begin{proof}
Write
\[
D_c(n)=\sum_{k=-n}^n(n-|k|)e^{ikc}\rho(|k|)
=S(c) n+e(c,n),
\]
where
\[
e(c,n)=-n\sum_{k=n+1}^\infty (e^{ikc}+e^{-ikc})\rho(k)
-\sum_{k=1}^n k (e^{ikc}+e^{-ikc})\rho(k).
\]
It remains to show that $\sum_{k=1}^n k|\rho(k)|=o(n)$.
Let $s_n=\sum_{k=1}^n|\rho(k)|$ and let $L=\lim s_n$.
Define the Ces\`aro average
$\sigma_n=\frac1n\sum_{k=1}^n s_k$, so $L=\lim\sigma_n$.
Then $\frac1n\sum_{k=1}^n k|\rho(k)|=\frac{n+1}{n} s_n-\sigma_n\to L-L=0$.
\end{proof}

Under stronger assumptions on the decay rate of the autocorrelation
function $\rho(k)$, improved estimates for the $o(n)$ term are available. 
\begin{thm} \label{thm-d>2}
Suppose that $\sum_{k=1}^\infty k|\rho(k)|<\infty$.  Then
\[
D_c(n)= S(c) n + S_0(c) +e(c,n),
\]
where 
$S_0(c)=-\sum_{k=-\infty}^\infty e^{ic k}|k|\rho(|k|)$ is continuous
on $[0,2\pi]$, $S(c)$ is $C^1$ on $[0,2\pi]$, and 
\[
|e(c,n)|\le 2\sum_{k=n+1}^\infty (k-n)|\rho(k)|=o(1).
\]
In particular, $D_c(n)=S(c)n+O(1)$ uniformly in $c$.
(Hence $K_c\in\{0,1\}$ for all $c\in(0,2\pi)$.)
\end{thm}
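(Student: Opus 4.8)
The plan is to start from the exact expression \eqref{eq-D} and peel off the linear-in-$n$ part from the rest. Splitting the weight $n-|k|$ gives
\[
D_c(n)=n\sum_{k=-n}^n e^{ikc}\rho(|k|)-\sum_{k=-n}^n|k|e^{ikc}\rho(|k|).
\]
The hypothesis $\sum_{k=1}^\infty k|\rho(k)|<\infty$ in particular forces $\sum_{k=1}^\infty|\rho(k)|<\infty$, so that both full series $S(c)=\sum_{k=-\infty}^\infty e^{ikc}\rho(|k|)$ (the Wiener--Khintchine power spectrum) and $-S_0(c)=\sum_{k=-\infty}^\infty|k|e^{ikc}\rho(|k|)$ converge absolutely. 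First I would complete each partial sum to its full series: the first equals $S(c)$ minus its tail, and the second equals $-S_0(c)$ minus its tail. The two tail corrections then recombine into
\[
e(c,n)=\sum_{|k|>n}(|k|-n)e^{ikc}\rho(|k|),
\]
which yields the decomposition $D_c(n)=S(c)n+S_0(c)+e(c,n)$ asserted by the theorem.

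For the error estimate I would pair the $k$ and $-k$ terms to obtain $|e(c,n)|\le 2\sum_{k=n+1}^\infty(k-n)|\rho(k)|$, a bound manifestly independent of $c$. Using $k-n\le k$ for $k>n$, this is dominated by $2\sum_{k=n+1}^\infty k|\rho(k)|$, the tail of a convergent series, hence $o(1)$ uniformly in $c$. Consequently $D_c(n)=S(c)n+O(1)$ uniformly, and since $M_c(n)-D_c(n)=(Ev)^2(1-\cos nc)/(1-\cos c)$ is bounded in $n$, also $M_c(n)=S(c)n+O(1)$ uniformly in $c$.

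The regularity claims are Fourier-analytic. The series defining $S_0$ has coefficients $|k|\rho(|k|)$, which are summable by hypothesis, so $S_0$ is the sum of an absolutely (hence uniformly) convergent trigonometric series and is therefore continuous on $[0,2\pi]$. For $S$, the formally differentiated series $\sum_k ik\,e^{ikc}\rho(|k|)$ again has summable coefficients $k|\rho(|k|)$, so it converges uniformly; this justifies term-by-term differentiation and gives $S\in C^1$.

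Finally, to read off $K_c$ from \eqref{eq-K}: where $S(c)>0$ we have $M_c(n)\sim S(c)n$, so $\log M_c(n)/\log n\to 1$ and $K_c=1$; where $S(c)=0$ the quantity $M_c(n)=O(1)$ is bounded, whence $\limsup_n\log M_c(n)/\log n\le 0$, which together with Proposition~\ref{prop-K+} pins $K_c^+=0$. I expect the genuine obstacle to lie precisely in upgrading this $\limsup\le 0$ to the honest limit $K_c=0$, i.e.\ in ruling out that $M_c(n)$ dips to $0$ along a subsequence. The route I would take is to exploit the exact asymptotics: when $S(c)=0$, $M_c(n)=S_0(c)+(Ev)^2(1-\cos nc)/(1-\cos c)+e(c,n)$, and since $M_c(n)\ge 0$ while $1-\cos nc$ returns arbitrarily close to $0$ (for every $c\in(0,2\pi)$, whether $c/2\pi$ is rational or not), one is forced to have $S_0(c)\ge 0$; when $S_0(c)>0$ this bounds $M_c(n)$ below and delivers $K_c=0$. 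The residual degenerate case $S(c)=S_0(c)=0$ is exactly the pathology flagged in Remark~\ref{rmk-exp}, and is where whatever remaining care is required must be spent.
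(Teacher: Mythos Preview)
Your argument is correct and follows exactly the paper's route: starting from~\eqref{eq-D}, split the weight $n-|k|$, complete the two partial sums to the full series $S(c)$ and $-S_0(c)$, and collect the tails into $e(c,n)=\sum_{|k|>n}(|k|-n)e^{ikc}\rho(|k|)$. The paper's proof is two lines and records only the final formula for $e(c,n)$; you spell out the $o(1)$ verification and the regularity of $S$ and $S_0$, all of which the paper leaves implicit.

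Your extended discussion of the parenthetical claim $K_c\in\{0,1\}$ is more scrupulous than the paper, which offers no argument for it. Your observation that, when $S(c)=0$, one must bound $M_c(n)$ away from zero to get an honest limit is well taken, and the paper simply does not address the residual possibility $S(c)=S_0(c)=0$ with $Ev\neq 0$. (Your pointer to Remark~\ref{rmk-exp} is slightly off, though: that remark concerns the measure of the set $\{S(c)=0\}$, not the existence of the limit $K_c$ at a single $c$.)
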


\begin{proof}
Compute that
\[
D_c(n)=\sum_{k=-n}^n(n-|k|)e^{ikc}\rho(|k|)
=S(c)\,n+S_0(c)+ e(c,n),
\]
where
$e(c,n)=\sum_{k=n+1}^\infty (k-n) (e^{ikc}+e^{-ikc})\rho(k)$.
\end{proof}

\begin{cor} \label{cor-d>1}
Suppose that $|\rho(k)|\le Ck^{-d}$ for $k\ge1$.
If $1<d<2$, then in Theorem~\ref{thm-d>1},
\[
|e(c,n)| \le 2C\frac{1}{(d-1)(2-d)}n^{2-d}.
\]
If $d>2$, then in Theorem~\ref{thm-d>2},
\[
|e(c,n)| \le 2C\Bigl\{\frac{1}{(d-1)(d-2)}+\frac1n\Bigr\}\,\frac{1}{n^{d-2}}.
\]
\end{cor}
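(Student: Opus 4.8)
The plan is to substitute the polynomial decay bound $|\rho(k)|\le Ck^{-d}$ directly into the explicit error estimates already supplied by Theorem~\ref{thm-d>1} and Theorem~\ref{thm-d>2}, and then to estimate the resulting sums by comparison with elementary integrals. No new conceptual ingredient is needed: the two cases $1<d<2$ and $d>2$ are governed by the two theorems respectively, and the only real work is to bound the partial sums of $k^{-d}$, $k^{1-d}$, and $(k-n)k^{-d}$ sharply enough to recover the stated constants.

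For $1<d<2$ I would start from the bound $|e(c,n)|\le 2n\sum_{k=n+1}^\infty|\rho(k)|+2\sum_{k=1}^n k|\rho(k)|$ of Theorem~\ref{thm-d>1}, which becomes $|e(c,n)|\le 2Cn\sum_{k=n+1}^\infty k^{-d}+2C\sum_{k=1}^n k^{1-d}$. Since $x\mapsto x^{-d}$ and $x\mapsto x^{1-d}$ are decreasing, monotone integral comparison yields $\sum_{k=n+1}^\infty k^{-d}\le\int_n^\infty x^{-d}\,dx=n^{1-d}/(d-1)$ and $\sum_{k=1}^n k^{1-d}\le\int_0^n x^{1-d}\,dx=n^{2-d}/(2-d)$, the latter integral converging at $0$ precisely because $d<2$. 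Adding the two contributions and combining the fractions via $\frac{1}{d-1}+\frac{1}{2-d}=\frac{1}{(d-1)(2-d)}$ produces exactly $2C\,n^{2-d}/((d-1)(2-d))$.

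For $d>2$ I would use Theorem~\ref{thm-d>2}, namely $|e(c,n)|\le 2\sum_{k=n+1}^\infty(k-n)|\rho(k)|\le 2C\sum_{k=n+1}^\infty(k-n)k^{-d}$. The one point I expect to require care is that the summand $(k-n)k^{-d}$ vanishes at $k=n$ and is unimodal rather than monotone on $[n,\infty)$, so the crude decreasing-function comparison is not directly available. I would circumvent this by the shift $j=k-n$, rewriting the sum as $\sum_{j\ge1}j(n+j)^{-d}$ and using the term-wise estimate $j(n+j)^{-d}\le\int_{j-1}^j(x+1)(n+x)^{-d}\,dx$, which holds because $x+1\ge j$ and $(n+x)^{-d}\ge(n+j)^{-d}$ on $[j-1,j]$. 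Summing over $j\ge1$ combines the right-hand sides into $\int_0^\infty(x+1)(n+x)^{-d}\,dx=\int_0^\infty x(n+x)^{-d}\,dx+\int_0^\infty(n+x)^{-d}\,dx$. After the substitution $u=n+x$, the first integral equals $\frac{n^{2-d}}{d-2}-\frac{n^{2-d}}{d-1}=\frac{n^{2-d}}{(d-1)(d-2)}$, the announced leading term, while the second equals $\frac{n^{1-d}}{d-1}\le n^{1-d}$, which supplies the correction $\frac1n\cdot n^{2-d}$. Collecting the two pieces gives $|e(c,n)|\le 2C\bigl(\frac{1}{(d-1)(d-2)}+\frac1n\bigr)n^{-(d-2)}$, as claimed; uniformity in $c$ is inherited from the theorems, whose bounds are already $c$-independent.
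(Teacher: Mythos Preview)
Your argument is correct and, in the case $1<d<2$, essentially identical to the paper's proof (your use of $\int_0^n x^{1-d}\,dx$ in place of the paper's $1+\int_1^n x^{1-d}\,dx$ is in fact slightly cleaner). In the case $d>2$ the paper takes a different and somewhat more direct route: rather than shifting by $j=k-n$ and introducing the majorant $(x+1)(n+x)^{-d}$, it simply splits $(k-n)k^{-d}=k^{1-d}-nk^{-d}$, bounds $\sum_{k>n}k^{1-d}$ from above by $\int_n^\infty x^{1-d}\,dx$, and bounds $\sum_{k>n}k^{-d}$ from \emph{below} by $\int_n^\infty x^{-d}\,dx-n^{-d}$; subtracting yields the stated estimate immediately. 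Your approach avoids the need for a lower integral bound at the cost of the $(x+1)$ device, while the paper's decomposition sidesteps the unimodality issue you flagged; both reach exactly the same constants.
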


\begin{proof}
The first term of $e(c,n)$ in Theorem~\ref{thm-d>1}
is dominated by
\[
2Cn\sum_{k=n+1}^\infty k^{-d}\le 2Cn\int_n^\infty x^{-d}\,dx \le 
2Cn^{-(d-2)}/(d-1).
\]
The second term is dominated by
\begin{align*}
2C\sum_{k=1}^n k^{1-d} & \le 2C\Bigl(1+\int_1^n x^{1-d}\,dx\Bigr)
=2Cn^{-(d-2)}/(2-d)-
2C(d-1)/(2-d) \\ & \le
2Cn^{-(d-2)}/(2-d).
\end{align*}
Combining these terms gives the result for $1<d<2$.

If $d>2$, then by Theorem~\ref{thm-d>2}, 
\begin{align*}
|e(c,n)| & \le 2C\sum_{k=n+1}^\infty (k-n)k^{-d}
=2C\sum_{k=n+1}^\infty k^{1-d}-2Cn\sum_{k=n+1}^\infty k^{-d} \\
& \le 2C\int_{n}^\infty x^{1-d}\,dx
-2Cn\Bigl(\int_n^\infty x^{-d}\,dx-n^{-d}\Bigr) \\
& =2C\Bigl\{\Bigl(\frac{1}{d-2}-\frac{1}{d-1}\Bigr)\frac{1}{n^{d-2}}+\frac{1}{n^{d-1}}\Bigr\}.
\end{align*}

\vspace{-4ex}
\end{proof}

\section{Nonsummable decay of correlations}
\label{sec-nonsum}

In this section, we 
reformulate the $0$--$1$ test in terms of Ces\`aro averages,
and give surprisingly weak sufficient conditions under which
$S(c)=\lim_{n\to\infty}\frac1n\int_X |p_c(n)|^2d\mu$
exists (for typical values of $c$).

Let $a_k\in\C$ be a sequence with partial sums
$s_n=\sum_{k=-n}^n a_k$ and set $\sigma_n=\frac1n\sum_{k=0}^{n-1} s_k$. 
Recall that the sequence $a_k$ is {\em Ces\`aro summable} if
$\lim_{n\to\infty}\sigma_n$ exists. If $\lim_{n\to\infty}s_n\to L$
then $\lim_{n\to\infty}\sigma_n=L$ (the converse is not true).

Defining $s_n$ and $\sigma_n$ as above with $a_k=e^{ikc}\rho(|k|)$, we obtain
\begin{align*}
\sigma_n=\frac{1}{n}\sum_{k=-n}^{n} (n-|k|)\rho(|k|)e^{ikc}=\frac1n D_c(n),
\end{align*}
the last equality following from~\eqref{eq-D}.
Since $S(c)=\lim_{n\to\infty}\frac1n D_c(n)$, we have proved the following
result.

\begin{lemma} \label{lem-Cesaro}
Let $c\in(0,2\pi)$.   Suppose that the sequence $a_k=e^{ikc}\rho(k)$
is Ces\`aro summable with limit $L(c)$.  Then $S(c)=L(c)$.
In particular, if $L(c)>0$, then $K_c=1$.
\qed
\end{lemma}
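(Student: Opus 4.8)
The plan is to read the lemma straight off the identities already assembled just above the statement, so the argument is essentially bookkeeping. The crucial link is the identity $\sigma_n=\frac1n D_c(n)$ obtained from~\eqref{eq-D} immediately before the lemma, combined with the relation $M_c(n)=D_c(n)+O(1)$ from Section~\ref{sec-summable} and the definition of the power spectrum as $S(c)=\lim_{n\to\infty}\frac1n M_c(n)$.

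First I would unwind the hypothesis. Cesàro summability of the sequence $a_k=e^{ikc}\rho(|k|)$ with limit $L(c)$ means precisely that $\lim_{n\to\infty}\sigma_n=L(c)$. Feeding in $\sigma_n=\frac1n D_c(n)$, this is exactly the statement $\lim_{n\to\infty}\frac1n D_c(n)=L(c)$.

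Next I would transfer this from $D_c(n)$ to $M_c(n)$. Writing $M_c(n)=D_c(n)+O(1)$ and dividing by $n$ gives $\frac1n M_c(n)=\frac1n D_c(n)+O(1/n)$, so the two averages have the same limit. Hence $S(c)=\lim_{n\to\infty}\frac1n M_c(n)$ exists and equals $L(c)$. The one point worth flagging is that it is the Cesàro hypothesis, and nothing stronger, that forces the limit defining $S(c)$ to exist (rather than supplying only a $\limsup$); this is really the whole content of the lemma, and there is no genuine obstacle beyond recognising it.

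Finally, for the concluding assertion, if $L(c)>0$ then $S(c)=L(c)>0$ is a well-defined, strictly positive power spectrum, so Proposition~\ref{prop-power} immediately yields $K_c=1$. In short, the substantive work was completed in setting up the Cesàro reformulation preceding the statement, and the lemma is its clean payoff.
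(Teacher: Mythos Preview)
Your proposal is correct and matches the paper's approach exactly: the paper's proof is precisely the paragraph preceding the lemma (note the \qed\ inside the statement), which records $\sigma_n=\frac1n D_c(n)$ and then invokes $S(c)=\lim_{n\to\infty}\frac1n D_c(n)$. You have simply made explicit the transfer from $D_c(n)$ to $M_c(n)$ via $M_c(n)=D_c(n)+O(1)$ and the appeal to Proposition~\ref{prop-power}, both of which the paper leaves implicit.
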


By Fej\'er's theorem~\cite{Katzn}, a special case is provided when $\rho(|k|)$
are Fourier coefficients of an integrable function.

\begin{thm}  \label{thm-d>.5}
Suppose that $\rho(|k|)$ are the Fourier coefficients of an $L^1$ function
$g:[0,2\pi]\to\R^{\ge 0}$.  
Then the sequence $e^{ikc}\rho(k)$ is Ces\`aro summable to 
$g(c)$ almost everywhere.  In particular,
Lemma~\ref{lem-Cesaro} holds for almost every $c\in(0,2\pi)$,
with $L(c)=g(c)$.

If $g$ is continuous, then the convergence is uniform in $c$
(and holds for every $c\in(0,2\pi)$).   In particular, $D_c(n)=S(c)n+o(n)$
uniformly in $c$.
\end{thm}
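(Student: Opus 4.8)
The plan is to recognise that the Ces\`aro average $\sigma_n$ introduced at the start of this section is precisely the classical Fej\'er mean of $g$ evaluated at the point $c$, and then to quote Fej\'er's theorem in its pointwise and uniform forms.

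First I would record the triangular form of the Ces\`aro average. The computation preceding Lemma~\ref{lem-Cesaro} already shows that, with $a_k=e^{ikc}\rho(|k|)$,
\[
\sigma_n=\frac1n D_c(n)=\sum_{|k|\le n}\Bigl(1-\frac{|k|}{n}\Bigr)\rho(|k|)e^{ikc}
\]
(the $|k|=n$ terms vanishing). Since by hypothesis $\rho(|k|)$ are the Fourier coefficients of $g$, the right-hand side is exactly the $n$-th Fej\'er mean $\sigma_n=(g*F_n)(c)$, where $F_n$ is the Fej\'er kernel. This identification is pure bookkeeping of the weights $(1-|k|/n)$ and is the only computational content of the argument.

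Second, for the almost-everywhere statement I would invoke Fej\'er's theorem~\cite{Katzn} in the form that, for any $g\in L^1[0,2\pi]$, the Fej\'er means converge to $g(c)$ at every Lebesgue point of $g$. As almost every $c$ is a Lebesgue point of an integrable function, $\sigma_n\to g(c)$ for almost every $c$; thus $a_k$ is Ces\`aro summable with limit $g(c)$ a.e., and Lemma~\ref{lem-Cesaro} yields $S(c)=g(c)$ almost everywhere. For continuous $g$ I would instead use the uniform form of Fej\'er's theorem, so that $\frac1n D_c(n)=\sigma_n\to S(c)$ uniformly in $c$; writing $D_c(n)=S(c)n+n(\sigma_n-S(c))$ with $\sup_c|\sigma_n-S(c)|\to0$ then gives $D_c(n)=S(c)n+o(n)$ uniformly.

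I do not anticipate a genuine obstacle: once the Fej\'er mean is identified, everything follows from standard harmonic analysis. The only points needing care are the index bookkeeping relating the paper's normalisation of $s_k$ and $\sigma_n$ to the standard Fej\'er normalisation, and the observation that the a.e.\ statement is the (slightly deeper) Lebesgue-point content of Fej\'er's theorem rather than the easy $L^1$-convergence statement. The hypothesis $g\ge0$ plays no role in the convergence itself; it merely records that $S(c)$ is automatically nonnegative, keeping the two sides consistent and making the conclusion $K_c=1$ on $\{g>0\}$ meaningful via Lemma~\ref{lem-Cesaro}.
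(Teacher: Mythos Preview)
Your proposal is correct and follows essentially the same route as the paper: identify $\sigma_n=\frac1n D_c(n)$ as the Fej\'er mean of $g$ at $c$, then quote Fej\'er's theorem in its uniform form for continuous $g$ and in its Lebesgue-point form for general $g\in L^1$. The paper even flags the same index bookkeeping issue you mention (its $\sigma_n$ is Katznelson's $\sigma_{n-1}(g,c)$), so there is nothing to add.
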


\begin{proof}
We have written $\sigma_n=\frac1n D_c(n)= 
\sum_{k=-n}^n(1-\frac{|k|}{n})e^{ikc}\rho(|k|)$.
This is $\sigma_{n-1}(g,c)$ in~\cite[p.12 (2.9)]{Katzn}).   

If $g$ is continuous, then by Fej\'er's Theorem
(\cite[Theorem~2.12]{Katzn}), $\sigma_n\to g$ uniformly, and
so $D_c(n)=n\sigma_n=ng(c)+o(n)$ uniformly in $c$.

For general $g\in L^1$, it follows from the discussion
in~\cite[pp.~19-20]{Katzn} that $\sigma_n\to g$ almost
everywhere, so that
$D_c(n)=n\sigma_n(g,c)=ng(c)+o(n)$ for almost every $c$.
\end{proof}

\begin{cor} \label{cor-d>.5}
Suppose that $\sum_{k\ge1}\rho(k)^2<\infty$.  Then the first statement
of Theorem~\ref{thm-d>.5} applies with $g(c)=
\sum_{k=-\infty}^\infty e^{ikc}\rho(|k|)$.  

In particular, $S(c)$ exists almost everywhere,
and $D_c(n)=S(c)n+o(n)$.
\end{cor}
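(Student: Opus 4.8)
The plan is to verify the two hypotheses of Theorem~\ref{thm-d>.5}: that $\rho(|k|)$ are the Fourier coefficients of a function $g$ that is (i) integrable and (ii) nonnegative. The integrability is the easy part. Since $\sum_{k\ge1}\rho(k)^2<\infty$ and $\rho(0)$ is finite, the two-sided even sequence $(\rho(|k|))_{k\in\Z}$ lies in $\ell^2(\Z)$. By Riesz--Fischer there is a unique $g\in L^2([0,2\pi])$ whose $k$th Fourier coefficient is $\rho(|k|)$, the series $\sum_{k=-\infty}^\infty e^{ikc}\rho(|k|)$ converging to $g$ in $L^2$; since $[0,2\pi]$ has finite measure, $g\in L^2\subseteq L^1$. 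As the coefficients are real and even, $g$ is real-valued. This disposes of hypothesis (i) and identifies $g$ with the claimed series.

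The main obstacle is hypothesis (ii), nonnegativity of $g$, since square-summability alone produces an $L^2$ function of unrestricted sign. The plan here is to exploit that $\rho$ is a genuine autocorrelation and hence a positive-definite sequence. Put $w=v-Ev$ and let $U$ denote the Koopman operator $Uw=w\circ f$, which is an isometry on $L^2(X)$ because $\mu$ is $f$-invariant. Then $\int_X w\circ f^k\,w\,d\mu=\rho(k)$ for $k\ge0$, and more generally $\langle U^j w,U^l w\rangle=\rho(|j-l|)$ for all $j,l\ge0$. Consequently, for any finite collection of complex scalars $c_0,\dots,c_N$ one has $\sum_{j,l}c_j\bar c_l\,\rho(|j-l|)=\bigl\|\sum_j c_j U^j w\bigr\|_2^2\ge0$, so $(\rho(|k|))_{k\in\Z}$ is positive definite. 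By Herglotz's theorem it is the moment sequence of a nonnegative measure on the circle; since these moments lie in $\ell^2$, the measure is absolutely continuous with an $L^2$ density, which by uniqueness of Fourier coefficients must equal $g$. Hence $g\ge0$ almost everywhere. (Alternatively, one can argue dynamically: by Proposition~\ref{prop-calc} the Fej\'er mean is $\sigma_n=\frac1n D_c(n)=\frac1n M_c(n)+O(1/n)$, and $M_c(n)=\|p_c(n)\|_2^2\ge0$ forces $\liminf_n\sigma_n(g,c)\ge0$ for every fixed $c\in(0,2\pi)$, while the general $L^1$ form of Fej\'er's theorem gives $\sigma_n(g,c)\to g(c)$ almost everywhere, so again $g\ge0$ a.e.)

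With both hypotheses confirmed, Theorem~\ref{thm-d>.5} applies directly: the sequence $e^{ikc}\rho(k)$ is Ces\`aro summable to $g(c)$ for almost every $c$, and Lemma~\ref{lem-Cesaro} then yields $S(c)=g(c)$ almost everywhere together with $D_c(n)=S(c)n+o(n)$. I expect the positive-definiteness step to be the crux of the argument, as it is the only place where the specific structure of $\rho$ as an autocorrelation—rather than a generic square-summable sequence—is used; everything else is a routine appeal to Riesz--Fischer and the results already established.
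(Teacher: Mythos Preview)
Your argument is correct, and its core step---square-summability of $\rho(|k|)$ puts $g\in L^2\subset L^1$ by Riesz--Fischer---is exactly the paper's proof, which occupies two sentences and stops there. You go further by verifying the nonnegativity hypothesis $g:[0,2\pi]\to\R^{\ge0}$ of Theorem~\ref{thm-d>.5}, which the paper's own proof never addresses; both your Herglotz argument and the alternative via $M_c(n)=\|p_c(n)\|_2^2\ge0$ are valid. In fact the nonnegativity hypothesis is not used in the \emph{proof} of Theorem~\ref{thm-d>.5}---the Fej\'er--Lebesgue theorem holds for arbitrary $g\in L^1$---so the paper's omission is harmless; but you are right that the theorem as \emph{stated} requires it, and your verification closes that gap.
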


\begin{proof}    Since $\sum_{k\ge1}\rho(k)^2<\infty$, the function
$g(c)=\sum_{k=-\infty}^\infty e^{ikc}\rho(|k|)$ lies in $L^2$
and the Fourier coefficients of $g$ are precisely $\rho(|k|)$.
Hence, we can apply the first statement of Theorem~\ref{thm-d>.5}.
\end{proof}

\section{Correlation method}
\label{sec-corr}

In~\cite{GMapp}, we proposed
computing $K_c$ as the correlation of the mean-square
displacement $M_c(n)$ (or $D_c(n)$) with $n$, rather than computing
the limit of $\log M(n)/\log n$ as in~\eqref{eq-K}.
In this section we verify that the theoretical
value of $K_c$ remains $0$ for regular dynamics and $1$ for chaotic dynamics.

Given vectors $x,y$ of length $n$, we define
\begin{align*}
& {\rm cov}(x,y)   = \frac1n\sum_{j=1}^n (x(j)-\bar x)(y(j)-\bar y),
\quad \text{where} \quad
\bar x = \frac1n\sum_{j=1}^n x(j)\; , \\
& \var(x)  =\cov(x,x)\; .
\end{align*}
Form the vectors $\xi=(1,2,\dots,n)$ and
$\Delta=(D_c(1),D_c(2),\dots,D_c(n))$.  
(In particular, $\var(\xi)=\frac{1}{12}(n^2-1)$.)
Define the correlation coefficient
\begin{align*}
K_c = \lim_{n\to\infty}{\rm corr}(\xi,\Delta)=
\frac{{\rm{cov}}(\xi,\Delta)}{\sqrt{{\var(\xi)}{\var(\Delta)}}}\in[-1,1]\;.
\end{align*}

\subsection{Quasiperiodic case}

For quasiperiodic dynamics, we have the following analogue of 
Theorem~\ref{thm-qp}.    However, we require stronger regularity for the
observable $v$, and a Diophantine condition on the frequency $\omega$
(in addition to the condition on $c$).

\begin{thm}
Suppose that $X=\T^m=\R^m/(2\pi\Z)^m$ and that $f:X\to X$ is given by
$f(x)=x+\omega\bmod 2\pi$.  Let $v:X\to\R$ be a nonvanishing $C^r$ 
observable with $r>3m/2$.  If $K_c$
is computed using the correlation method, then for almost every 
$\omega\in[0,2\pi]$ we obtain
$K_c=0$ for almost every $c\in[0,2\pi]$.
\end{thm}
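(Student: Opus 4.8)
The plan is to feed the explicit formula from Theorem~\ref{thm-qp} into the correlation coefficient and show that the linearly growing index vector $\xi$ is asymptotically uncorrelated with the bounded, almost periodic sequence $\Delta$. Write $v(x)=\sum_\ell v_\ell e^{i\ell\cdot x}$ and set $\theta_\ell=c+\ell\cdot\omega$. Subtracting the $\ell=0$ term of~\eqref{eq-formal2} (which is exactly $(Ev)^2(1-\cos nc)/(1-\cos c)$, since $Ev=v_0$) from $M_c(n)$, Theorem~\ref{thm-qp} yields
\[
D_c(n)=\sum_{\ell\neq0}b_\ell\bigl(1-\cos n\theta_\ell\bigr),\qquad b_\ell=\frac{|v_\ell|^2}{1-\cos\theta_\ell}.
\]
The reason for strengthening $r>m$ to $r>3m/2$ is precisely that the denominator here is $1-\cos\theta_\ell=\tfrac12|1-e^{i\theta_\ell}|^2$ rather than the first power appearing in~\eqref{eq-formal}: combining $|v_\ell|=O(|\ell|^{-r})$ with the Diophantine bound~\eqref{eq-Dio} gives $b_\ell=O(|\ell|^{-2r+2m+2\epsilon})$, and $\sum_{\ell\neq0}b_\ell<\infty$ exactly when $r>3m/2$ (for $c$ outside a null set). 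Hence $D_c(j)=A-R(j)$ with $A=\sum_{\ell\neq0}b_\ell$ a finite constant and $R(j)=\sum_{\ell\neq0}b_\ell\cos(j\theta_\ell)$ an absolutely (hence uniformly) convergent, uniformly bounded almost periodic sequence. Since $\cov$ and $\var$ are unaffected by subtracting the constant $A$ from every entry of $\Delta$, it suffices to analyse $\cov(\xi,R)$ and $\var(R)$.

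First I would show $\cov(\xi,\Delta)=-\cov(\xi,R)=o(n)$. Writing $\cov(\xi,R)=\frac1n\sum_{j=1}^n jR(j)-\bar\xi\,\bar R$, I would interchange the $\ell$-sum with the $n$-limit by dominated convergence, with summable dominating sequence $b_\ell$: for each fixed $\ell\neq0$ both the Cesàro average $\frac1n\sum_{j=1}^n\cos(j\theta_\ell)$ and the weighted average $\frac1{n^2}\sum_{j=1}^n j\cos(j\theta_\ell)$ tend to $0$ (closed-form geometric sums, since $\theta_\ell\not\equiv0\bmod 2\pi$), while both are bounded by $1$ uniformly in $n$. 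Thus $\bar R\to0$ and $\frac1{n^2}\sum_{j=1}^n jR(j)\to0$, so $\frac1n\sum_j jR(j)=o(n)$ and $\bar\xi\,\bar R=o(n)$, giving $\cov(\xi,\Delta)=o(n)$.

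Next I would compute $\var(\Delta)=\var(R)=\frac1n\sum_{j=1}^nR(j)^2-\bar R^2$. Expanding $R(j)^2$ as a double sum over $\ell,\ell'$ and using $\cos(j\theta_\ell)\cos(j\theta_{\ell'})=\tfrac12\bigl[\cos(j(\theta_\ell-\theta_{\ell'}))+\cos(j(\theta_\ell+\theta_{\ell'}))\bigr]$, the Cesàro average of each summand tends to $\tfrac12$ when $\theta_\ell-\theta_{\ell'}\equiv0\bmod 2\pi$ and to $0$ otherwise. Here both hypotheses on the frequencies enter: rational independence of the components of $\omega$ (the Diophantine condition, valid for almost every $\omega$) forces $(\ell-\ell')\cdot\omega\not\equiv0$ for $\ell\neq\ell'$, killing the off-diagonal $\cos(j(\theta_\ell-\theta_{\ell'}))$ terms, while $2c+(\ell+\ell')\cdot\omega\not\equiv0$ for almost every $c$ kills every $\cos(j(\theta_\ell+\theta_{\ell'}))$ term. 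Interchanging the double sum with the limit by dominated convergence (dominating sequence $b_\ell b_{\ell'}$, summable since $\sum_\ell b_\ell<\infty$) yields $\var(R)\to\tfrac12\sum_{\ell\neq0}b_\ell^2=:L$. This limit is finite (summability of $b_\ell$ forces square-summability) and strictly positive because $v$ is nonconstant, so some $b_\ell>0$.

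Combining the two estimates, and using $\var(\xi)=\tfrac1{12}(n^2-1)$, I obtain
\[
K_c=\lim_{n\to\infty}\frac{\cov(\xi,\Delta)}{\sqrt{\var(\xi)\var(\Delta)}}=\lim_{n\to\infty}\frac{o(n)}{(n/\sqrt{12})\sqrt{L}}=0.
\]
The hard part will be the rigorous justification of the two interchanges of the infinite $\ell$-summation with the $n\to\infty$ Cesàro limits, and in particular establishing that $\var(\Delta)$ genuinely \emph{converges} to the positive constant $L$ rather than oscillating between distinct $\liminf$ and $\limsup$. This is exactly where the strength of the hypotheses is consumed: $r>3m/2$ supplies the absolute summability of $b_\ell$ (hence of $b_\ell b_{\ell'}$) that legitimises dominated convergence, and the Diophantine condition on $\omega$ removes the near-resonant cross terms that would otherwise obstruct convergence of the variance.
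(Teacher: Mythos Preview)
Your argument is correct and follows the same overall architecture as the paper's proof---write $D_c(j)$ explicitly via the formula from Theorem~\ref{thm-qp}, strip off the constant part, and then estimate $\cov(\xi,\Delta)$ and $\var(\Delta)$ separately---but the technical execution differs in an instructive way.

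The paper proves the sharper estimates $\cov(\xi,\Delta)=O(1)$ and $\var(\Delta)=a+O(1/n)$ by invoking, in addition to~\eqref{eq-Dio}, a second pair of Diophantine conditions~\eqref{eq-Dio2} on $2c+\ell\cdot\omega$ and on $\ell\cdot\omega$; these control the small divisors $|1-e^{i(\theta_\ell\pm\theta_{\ell'})}|^{-1}$ arising from the finite geometric sums $\sum_{j=1}^n\cos j(\theta_\ell\pm\theta_{\ell'})$ in the expansion of $\sum_j\widehat D_c(j)^2$, yielding an explicit bound on the whole double sum. Your route instead observes that $r>3m/2$ makes $\sum_{\ell\neq0}b_\ell<\infty$ (because the denominator is $1-\cos\theta_\ell\sim\tfrac12|1-e^{i\theta_\ell}|^2$, squaring the small divisor from~\eqref{eq-Dio}), and then passes the Ces\`aro limit through the $\ell$-sum (respectively the $(\ell,\ell')$-sum) by dominated convergence. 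This yields only $\cov(\xi,\Delta)=o(n)$ and $\var(\Delta)\to L>0$, but that still suffices for $K_c=0$.

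What each approach buys: the paper's argument gives a quantitative rate $\operatorname{corr}(\xi,\Delta)=O(1/n)$, at the cost of genuinely needing the Diophantine hypothesis on $\omega$. Your dominated-convergence argument is softer and in fact requires only rational independence of $(1,\omega_1,\ldots,\omega_m)$ over $\Z$ to kill the off-diagonal terms (your parenthetical ``the Diophantine condition'' overstates what you actually use there). One small wording point: positivity of $L=\tfrac12\sum_{\ell\neq0}b_\ell^2$ needs $v$ nonconstant rather than merely nonzero, as you note; the paper's hypothesis ``nonvanishing'' should be read the same way.
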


\begin{proof}
The proof of Theorem~\ref{thm-qp} shows that $M_c(n)$
(equivalently $D_c(n)$) is bounded for almost every $c$ provided 
$r>m$.   We show for $r>3m/2$ that 
$\var(\Delta)=O(1)$ and $\cov(\xi,\Delta)=O(1)$
for almost every $\omega$ and $c$.
Moreover, we use the fact that $v$ is nonvanishing
to show that 
$\var(\Delta)=a+O(1/n)$, where $a>0$.
It then follows that
\[
 {\rm corr}(\xi,\Delta)=\frac{O(1)}
{\sqrt{\frac{1}{12}(n^2-1)}\sqrt{a+O(1/n)}}=O(1/n),
\]
as required.

The starting point is the calculations~\eqref{eq-Msn} and~\eqref{eq-formal2}
which gives
\[
M_c(j)= \sum_{\ell}|v_\ell|^2\frac{1-\cos j(c+\ell\cdot\omega)}
{1-\cos(c+\ell\cdot\omega)}=
\sum_{\ell}w_\ell(\cos j\theta_\ell-1),
\]
where $\theta_\ell=c+\ell\cdot\omega$, 
$w_\ell=-|v_\ell|^2(1-\cos \theta_\ell)^{-1}$.
Hence
\[
D_c(j)= \sum_{\ell}w_\ell(\cos j\theta_\ell-1)-C(1-\cos jc),
\]
where $C=(Ev)^2/(1-\cos c)$.
Adding a constant (independent of $j$) to $D_c(j)$
does not alter the value of ${\rm corr}(\xi,\Delta)$ so we may replace
$D_c(j)$ by
\[
\widehat D_c(j)= \sum_{\ell}w_\ell \cos j\theta_\ell+C\cos jc,
\]
when proving that $\cov(\xi,\Delta)=O(1)$ and
$\var(\Delta)=an+O(1)$ with $a>0$.
Hence it suffices to show that for almost every $\omega$ and
$c$ there exists $a>0$ such that 
\[
\sum_{j=1}^n \widehat D_c(j)=O(1),\quad 
\sum_{j=1}^n j\widehat D_c(j)=O(n), \quad  
\sum_{j=1}^n \widehat D_c(j)^2=an+O(1).
\]

Formally,
\begin{align} 
\label{eq-D1}
\sum_{j=1}^n \widehat D_c(j) & = \sum_\ell w_\ell \sum_{j=1}^n\cos j\theta_\ell
+C\sum_{j=1}^n\cos jc, \\
\label{eq-D2}
\sum_{j=1}^n j\widehat D_c(j) & = \sum_\ell w_\ell \sum_{j=1}^n j\cos j\theta_\ell
+C\sum_{j=1}^n j\cos jc, \\
\nonumber
\sum_{j=1}^n \widehat D_c(j)^2 & = 
\sum_{\ell,\ell'} w_\ell w_{\ell'} \sum_{j=1}^n
\cos j\theta_\ell \cos j\theta_{\ell'} \\ & \qquad
+2C\sum_\ell w_\ell \sum_{j=1}^n\cos jc\cos j\theta_\ell 
+C^2\sum_{j=1}^n\cos^2 jc. 
\label{eq-D3}
\end{align}

For $\varphi\in(0,2\pi)$, we have
\begin{align*}
\sum_{j=1}^n e^{ij\varphi} =
e^{i\varphi}\frac{e^{in\varphi}-1}{e^{i\varphi}-1}, \qquad
\sum_{j=1}^n je^{ij\varphi} =
\frac{ne^{i(n+1)\varphi}}{e^{i\varphi}-1}.
\end{align*}
In particular, $\sum_{j=1}^n\cos j\varphi=O(1)$
and $\sum_{j=1}^n j\cos j\varphi=O(n)$.  By~\eqref{eq-D1} and~\eqref{eq-D2},
formally we have $\sum_{j=1}^n \widehat D_c(j)=O(1)$
and $\sum_{j=1}^n j\widehat D_c(j)=O(n)$.   Turning to~\eqref{eq-D3},
\[
\sum_{j=1}^n\cos^2 jc=
\frac12 \sum_{j=1}^n(1+\cos2jc)=\frac12 n +O(1),
\]
for $c\neq\pi$ and formally
\[
\sum_\ell w_\ell \sum_{j=1}^n\cos jc\cos j\theta_\ell
=\frac12\sum_\ell w_\ell \sum_{j=1}^n(\cos j(c+\theta_\ell)+
\cos j(c-\theta_\ell))=O(1),
\]
while
\begin{align*}
& \sum_{\ell,\ell'} w_\ell w_{\ell'} \sum_{j=1}^n
\cos j\theta_\ell \cos j\theta_{\ell'}
 =\frac12\sum_{\ell,\ell'}w_\ell w_{\ell'}\sum_{j=1}^n
\{\cos j(\theta_\ell+\theta_{\ell'}) +\cos j(\theta_\ell-\theta_{\ell'})\} \\
& \enspace =\frac12\sum_{\ell}w_\ell^2 n +
\frac12\sum_{\ell}w_\ell^2\sum_{j=1}^n\cos 2j\theta_\ell 
+\frac12\sum_{\ell\neq\ell'}w_\ell w_{\ell'}\sum_{j=1}^n
\{\cos j(\theta_\ell+\theta_{\ell'}) +\cos j(\theta_\ell-\theta_{\ell'})\} \\
& \enspace  =\frac12\sum_\ell w_\ell^2 n+O(1).
\end{align*}
Hence $\sum_{j=1}^n \widehat D_c(j)^2=an+O(1)$,
with $a=\frac12(\sum_\ell w_\ell^2+C^2)$.
If $v$ is nonvanishing, then $v_\ell$, and hence $w_\ell$,
is nonzero for at least one $\ell$ so that $a>0$.

It remains to justify the formal calculations.   We give the details for the 
first term in~\eqref{eq-D3} focusing on the most difficult expression
\[
I=\sum_{\ell\neq\ell'}w_\ell w_{\ell'}\sum_{j=1}^n
\{\cos j(\theta_\ell+\theta_{\ell'}) +\cos j(\theta_\ell-\theta_{\ell'})\}.
\]
Let $\epsilon>0$.  We assume the Diophantine conditions~\eqref{eq-Dio} and
\begin{align} \label{eq-Dio2}
\dist|2c+\ell\cdot\omega,2\pi\Z|\ge d_2|\ell|^{-(m+\epsilon)}, \quad
\dist|\ell\cdot\omega,2\pi\Z|\ge d_2|\ell|^{-(m+\epsilon)},
\end{align}
which are satisfied by almost all $c$ and $\omega$ for all nonzero $\ell$.
Proceeding as in the proof of Theorem~\ref{thm-qp},
\begin{align*}
|I| & \le C  \sum_{\ell\neq\ell'}|v_\ell|^2 |v_{\ell'}|^2 
(1-\cos\theta_\ell)^{-1} (1-\cos\theta_{\ell'})^{-1}
(|1-e^{i(2c+(\ell+\ell')\omega})|^{-1}
+|1-e^{i(\ell-\ell')\omega}|^{-1}) \\
& \le C'  \sum_{\ell\neq\ell'}|\ell|^{-2r} |\ell'|^{-2r} 
|\ell|^{m+\epsilon} |\ell'|^{m+\epsilon}
(|\ell+\ell'|^{m+\epsilon}+ |\ell-\ell'|^{m+\epsilon}) \\
& \le C''  \sum_{k_1,k_2=1}^\infty k_1^{m-1}k_2^{m-1}k_1^{-2r} k_2^{-2r} 
k_1^{m+\epsilon} k_2^{m+\epsilon} (k_1+k_2)^{m+\epsilon} \\
& \le C'''  \sum_{k_1,k_2=1}^\infty k_1^{m-1}k_2^{m-1}k_1^{-2r} k_2^{-2r} 
k_1^{m+\epsilon} k_2^{m+\epsilon} (k_1^{m+\epsilon}+k_2^{m+\epsilon}) \\
& = 2C'''  \sum_{k_1=1}^\infty k_1^{-(1+2(r-3m/2-\epsilon))}\sum_{k_2=1}^\infty
k_2^{-(1+2(r-m-\epsilon/2))}<\infty
\end{align*}
provided we choose $\epsilon>0$ so small that $r>3m/2+\epsilon$.
 \end{proof}

\subsection{Chaotic case}

Recall that $K_c=1$ in definition~\eqref{eq-K}
if and only if $M_c(n)=an+o(n)$ where $a>0$.
Equivalently $D_c(n)=an+o(n)$ with $a>0$.
We show that this is a sufficient condition for $K_c=1$ via the correlation
method.

\begin{thm}   Let $c\in(0,2\pi)$.
Suppose that $D_c(n)=an+o(n)$ for some $a>0$
and that $K_c$ is computed using the correlation method.  Then $K_c = 1$.
\end{thm}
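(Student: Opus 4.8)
The plan is to exploit the fact that, under the hypothesis $D_c(n)=an+o(n)$, the data vector $\Delta$ is asymptotically a positive scalar multiple of $\xi$ plus a lower-order perturbation, so that the correlation coefficient is forced to $1$. Concretely, write $r(j)=D_c(j)-aj$, so that $r(j)=o(j)$ by hypothesis, and set $R=(r(1),\dots,r(n))$. Since covariance is bilinear and invariant under adding constants to either argument, we have $\Delta=a\xi+R$ and hence
\[
\cov(\xi,\Delta)=a\,\var(\xi)+\cov(\xi,R),\qquad
\var(\Delta)=a^2\var(\xi)+2a\,\cov(\xi,R)+\var(R).
\]
Recall from the text that $\var(\xi)=\tfrac1{12}(n^2-1)\sim n^2/12$, so the target is to show that both $\cov(\xi,R)$ and $\var(R)$ are of smaller order than $\var(\xi)$.

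The key step, and the only real obstacle, is the second-moment estimate $\var(R)=o(\var(\xi))=o(n^2)$. This is where the pointwise bound $r(j)=o(j)$ must be upgraded to an averaged $L^2$ bound. I would argue as follows: since $\var(R)\le\frac1n\sum_{j=1}^n r(j)^2$, fix $\epsilon>0$ and choose $N$ with $|r(j)|\le\epsilon j$ for $j\ge N$; then
\[
\frac1n\sum_{j=1}^n r(j)^2\le\frac1n\sum_{j=1}^{N}r(j)^2+\frac{\epsilon^2}{n}\sum_{j=N+1}^n j^2
\le\frac{C_N}{n}+\frac{\epsilon^2 n^2}{3}+o(n^2).
\]
Dividing by $\var(\xi)\sim n^2/12$ gives $\limsup_n \var(R)/\var(\xi)\le 4\epsilon^2$, and letting $\epsilon\to0$ yields $\var(R)=o(\var(\xi))$. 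By Cauchy--Schwarz, $|\cov(\xi,R)|\le\sqrt{\var(\xi)\,\var(R)}=o(\var(\xi))$ as well.

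Finally I would substitute these estimates into the correlation formula. The numerator is $a\,\var(\xi)+o(\var(\xi))=a\,\var(\xi)\,(1+o(1))$, while the denominator is
\[
\sqrt{\var(\xi)}\,\sqrt{a^2\var(\xi)+o(\var(\xi))}
=\var(\xi)\,\sqrt{a^2+o(1)}=a\,\var(\xi)\,(1+o(1)),
\]
using $a>0$ to extract the positive square root. Hence ${\rm corr}(\xi,\Delta)\to1$, i.e.\ $K_c=1$. I expect the linear-algebra bookkeeping and the limit evaluation to be routine; the one place demanding care is the passage from $r(j)=o(j)$ to $\var(R)=o(n^2)$, which is essentially a Ces\`aro/tail-splitting argument and must be done uniformly enough that the $\epsilon$ can be sent to zero after the limit in $n$.
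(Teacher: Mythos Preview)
Your proof is correct and follows the same underlying idea as the paper's: write $D_c(j)=aj+e(j)$ with $e(j)=o(j)$, and show by a tail-splitting argument that the error contributions are of lower order than $\var(\xi)\sim n^2/12$. The organization differs slightly. The paper expands $\cov(\xi,\Delta)$ and $\var(\Delta)$ directly as finite sums and reduces matters to the two estimates $\sum_{j=1}^n e(j)=o(n^2)$ and $\sum_{j=1}^n j\,e(j)=o(n^3)$, each proved by the same $\epsilon$--$N$ splitting you use. You instead invoke bilinearity to write $\cov(\xi,\Delta)=a\,\var(\xi)+\cov(\xi,R)$ and $\var(\Delta)=a^2\var(\xi)+2a\,\cov(\xi,R)+\var(R)$, establish the single second-moment bound $\var(R)=o(n^2)$, and then dispatch the cross term via Cauchy--Schwarz. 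Your route is marginally more economical, since Cauchy--Schwarz spares you a separate estimate for $\sum j\,e(j)$; the paper's route is more explicit, giving the leading constants $\tfrac1{12}a n^2$ and $\tfrac1{12}a^2 n^2$ directly. Either way the content is the same.
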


\begin{proof}   We claim that 
$\cov(\xi,\Delta)=\frac{1}{12}a n^2+o(n^2)$ and
$\var(\Delta)=\frac{1}{12}a^2 n^2+o(n^2)$.  The result is then
immediate.

We verify the claim for $\cov(\xi,\Delta)$.  The verification for
$\var(\Delta)$ is similar.  Write $D_c(n)=an+e(n)$ where $e(n)=o(n)$.
Then
\begin{align*}
\cov(\xi,\Delta) &=\frac1n\sum_{j=1}^n j(aj+e(j))
-\Bigl(\frac1n\sum_{j=1}^n j\Bigr)\Bigl(\frac1n\sum_{j=1}^n(aj+e(j))\Bigr)  \\
& = \frac16 a(n+1)(2n+1)+\frac1n\sum_{j=1}^n je(j)
-\frac{1}{2n}(n+1)\Bigl\{\frac12 an(n+1)+\sum_{j=1}^n e(j)\Bigr\} \\
& = \frac{1}{12}an^2 + O(n) + \frac1n\sum_{j=1}^n je(j)-\frac{1}{2n}(n+1)\sum_{j=1}^n e(j).
\end{align*}
Hence it remains to show that $\sum_{j=1}^n je(j)=o(n^3)$ and
$\sum_{j=1}^n e(j)=o(n^2)$.

Since $e(n)=o(n)$, there is a constant $C>0$ such that $|e(n)|\le Cn$
for $n\ge1$.  Also, given $\epsilon>0$, there exists $n_0\ge1$ such
that $|e(n)|\le\epsilon n$ for all $n\ge n_0$.  Choose $n_1$ such that
$Cn_0^2/n_1^2\le\epsilon$.  Then for all $n\ge n_1$,
\begin{align*}
|\frac{1}{n^2}\sum_{j=1}^n e(j)| & \le \frac{1}{n^2}\Bigl(\sum_{j=1}^{n_0} |e(j)|
+\sum_{j=n_0+1}^n |e(j)|\Bigr)  
 \le \frac{1}{n_1^2}Cn_0^2 + \frac{1}{n^2}\epsilon\sum_{j=1}^n j \\
& \le \epsilon + \epsilon = 2\epsilon,
\end{align*} 
so that $\sum_{j=1}^n e(j)=o(n^2)$.  Similarly $\sum_{j=1}^n je(j)=o(n^3)$.
\end{proof}

\section{Discussion}
\label{sec-potato}
We have addressed the issue of validity of the $0$--$1$ test as
presented in \cite{GM05,GMapp}. The original $0$--$1$ test
\cite{GM04} included an equation for a phase variable $\theta(n+1)
= \theta(n) + c + v\circ f^n$ rather than a constant ``frequency''
$c$. The extra equation for the phase driven by the observable made
available theorems from ergodic theory on skew product systems
\cite{FMT03,MN04,NMA01}.  These theorems rely on the fact that for typical
observables $v$ the augmented system with the phase variable is mixing.

In the modified version proposed in~\cite{GM05}, the phase variable is not
mixing, and the results from ergodic theory are not applicable anymore. 
Nevertheless, the modified test is more effective, particularly for systems
with noise~\cite{GM05}.   In this paper, we have verified that the 
modified test can be rigorously justified.  Moreover, our theoretical
results are stronger than the corresponding results mentioned in~\cite{GM04}
for the original test.

Our main results in this paper are that $K_c=0$ with probability one in the case
of periodic or quasiperiodic dynamics, and that $K_c=1$ with probability one
for ``sufficiently chaotic'' dynamics.   The latter includes
dynamical systems with hyperbolicity, (including weakly mixing
systems such as Pomeau-Manneville intermittency maps).   In particular,
nonuniform hyperbolicity assumptions combined with summable 
autocorrelations for the observable suffice to obtain $K_c=1$.
In the absence of hyperbolicity, we still obtain $K_c=1$
(with probability one) for observables with exponentially decaying
autocorrelations.   These results extend to systems that are mixing
up to a cycle of finite length.

We also made explicit the connection with power spectra: the test
yields $K_c=1$ with probability one if and only if the power spectrum
is well-defined and positive for almost all frequencies.   The criteria
above -- exponential decay of autocorrelations or summable correlations plus
hyperbolicity (up to a finite cycle) -- are sufficient conditions for 
existence and positivity of the power spectrum.

There remains the question of whether typical smooth dynamical systems are
either quasiperiodic or have power spectra that are defined and positive
almost everywhere.   This is required for a complete
justification of the test for chaos.    Unfortunately the current
understanding of dynamical systems is inadequate to answer this question,
but all numerical studies so far indicate this to be the case.
We leave it as a challenge to the skeptical reader to concoct a robust
smooth example where the test fails!
On the positive side, we showed in this paper that under a mild assumption
on autocorrelations, slightly stronger than summable but much weaker
than exponential, we obtain either $K_c=0$ or $K_c=1$ for each choice of $c$,
though without invoking hyperbolicity
we cannot rule out the possibility that both $K_c=0$ and $K_c=1$ occur with 
positive probability.

Our investigations of the validity of the test for chaos enabled us
to construct an improved version of our test. 
The modification which amounts to using $D_c(n)$ rather than the
mean square displacement $M_c(n)$ was shown to significantly improve
the test in \cite{GMapp}.   In addition, we showed in~\cite{GMapp} 
that $K_c$ is better computed by
correlating the mean square displacement with linear growth
rather than computing the $\log$-$\log$ slope.
In this paper, we have shown that our rigorous results apply also
to the improved implementation of the test in~\cite{GMapp}.

\paragraph{Acknowledgements}  
The research of GG was supported in part by the Australian Research Council.
The research of IM was supported in part by EPSRC Grant
EP/F031807/1.

IM acknowledges the hospitality of the University of Sydney where
parts of this work was done, and is greatly indebted to the University
of Houston for the use of e-mail.

\end{document}